\newtheorem{definition}[subsection]{Definition}
\newtheorem{proposition}[subsection]{Proposition}
\newtheorem{theorem}[subsection]{Theorem}
\newtheorem{claim}[subsection]{Assertion}
\newtheorem{corollary}[subsection]{Corollary}
\newtheorem{conjecture}[subsection]{Conjecture}
\renewcommand\qedhere{\hfill\qedsymbol}
\theoremstyle{remark}
\newtheorem{remark}[subsection]{Remark}
\tikzset{
    dot/.style={circle,draw,fill,inner sep=1.25pt},
    onearrow/.style={postaction={decorate}, decoration={markings,mark=at position .6 with {\arrow[draw,line width=1pt]{>}}}},
    inversearrow/.style={postaction={decorate}, decoration={markings,mark=at position .45 with {\arrow[draw,line width=1pt]{<}}}},
    twoarrows/.style={draw, postaction={decorate}, decoration={markings,mark=at position .35 with {\arrow[draw,line width=1pt]{>}},mark=at position .75 with {\arrow[draw,line width=1pt]{>}}}},
    twoarrowsempty/.style={postaction={decorate}, decoration={markings,mark=at position .3 with {\arrow[draw,line width=1pt]{>}},mark=at position .7 with {\arrow[draw,line width=1pt]{>}}}},
    inversetwoarrows/.style={draw, postaction={decorate}, decoration={markings,mark=at position .35 with {\arrow[draw,line width=1pt]{<}},mark=at position .7 with {\arrow[draw,line width=1pt]{<}}}},
    squiggly/.style={draw, decorate,decoration={snake,amplitude=.3mm,segment length=2mm}},
    fastsquiggly/.style={draw, decorate,decoration={snake,amplitude=.3mm,segment length=1mm}},
    inversesquiggly/.style={draw, decorate,decoration={snake,amplitude=.2mm,segment length=2mm},postaction={decorate,decoration={markings,mark=at position .45 with {\arrow[draw,line width=1pt]{<}}}}},
%    Xobject/.style={color=red!75!black,squiggly},
%    Aobject/.style={color=blue!50!red,ultra thick},
%    Cobject/.style={color=blue},
    degreeshift/.style={onearrow,dashed},
    octarine/.style={postaction={decorate,decoration={markings,mark=at position .6 with {\arrow[draw,line width=1pt]{>}}}}},
}
\newcommand\object[1]{\,\tikz[baseline=(basepoint)]\path[{#1}](0,-4pt) -- (0,12pt) (0,0) coordinate (basepoint);\,}
\newcommand\mult{\,\tikz[baseline=(basepoint)]{ 
    \path (0,0) coordinate (basepoint) (0,4pt) node[dot] {};
    \draw[](-6pt,-4pt) -- (0,4pt);
    \draw[](6pt,-4pt) -- (0,4pt);
    \draw[](0,4pt) -- (0,12pt);
  }\,}
\newcommand\comult{\,\tikz[baseline=(basepoint)]{ 
    \path (0,0) coordinate (basepoint) (0,4pt) node[dot] {};
    \draw[](0,-4pt) -- (0,4pt);
    \draw[](0,4pt) -- (-6pt,12pt);
    \draw[](0,4pt) -- (6pt,12pt);
  }\,}
\newcommand\define[1]{{\em #1}}
\newcommand\cat[1]{\textmd{\textsc{#1}}}
\newcommand\Ff{{\mathcal F}}
\newcommand\DD{{\mathbb D}}
\newcommand\KK{{\mathbb K}}
\newcommand\NN{{\mathbb N}}
\newcommand\RR{{\mathbb R}}
\renewcommand\SS{{\mathbb S}}
\newcommand\T{{\mathrm T}}
\renewcommand{\d}{{\mathrm d}}
\newcommand\m{\mathbf{m}}
\newcommand\n{\mathbf{n}}
\newcommand\st{{\textrm{ s.t.\ }}}
\DeclareMathOperator{\Aut}{Aut}
\DeclareMathOperator{\spec}{Spec}
\DeclareMathOperator{\Sym}{Sym}
\DeclareMathOperator{\End}{End}
\DeclareMathOperator{\diag}{diag}
\DeclareMathOperator{\Config}{Config}
\DeclareMathOperator{\hlim}{holim}
\DeclareMathOperator{\hcolim}{hocolim}
\DeclareMathOperator{\colim}{colim}
\DeclareMathOperator{\Maps}{\underline{Maps}}
\DeclareMathOperator{\DGVect}{\cat{DGVect}}
\DeclareMathOperator{\Chains}{Chains}
\DeclareMathOperator{\Homology}{H}
\renewcommand\H\Homology
\newcommand\E{\mathrm{E}}
\newcommand{\id}{\mathrm{id}}
\newcommand\dR{\mathrm{dR}}
\DeclareMathOperator{\Frob}{Frob}
\DeclareMathOperator{\invFrob}{invFrob}
\newcommand\h{\mathrm h}
\newcommand\hFrob{\h\Frob}
\newcommand\hdiFrob{\h^{\di}\Frob}
\newcommand\hprFrob{\h^{\pr}\Frob}
\newcommand{\hBDF}{\DD(\Frob_{0})}
\newcommand\shBDF\hBDF
\DeclareMathOperator{\LB}{LB}
\newcommand\sh{\mathrm{sh}}
\newcommand\shLB{\sh\LB}
\DeclareMathOperator{\invLB}{invLB}
\DeclareMathOperator{\surinvLB}{surinvLB}
\DeclareMathOperator{\Pois}{Pois}
\DeclareMathOperator{\Lie}{Lie}
\DeclareMathOperator{\qloc}{QLoc}
\newcommand\QLoc\qloc
\DeclareMathOperator{\di}{di}
\DeclareMathOperator{\pr}{pr}
\newcommand\op{\mathrm{op}}
\newcommand\cpt{\mathrm{cpt}}
\newcommand\dist{\mathrm{dist}}
\newcommand\mono{\hookrightarrow}
\newcommand\epi{\twoheadrightarrow}
\newcommand\onto\epi
\newcommand\<\langle
\renewcommand\>\rangle
\renewcommand\[{\llbracket}
\renewcommand\]{\rrbracket}
\newcommand\Circ{\ocircle}
\title{Poisson AKSZ theories and their quantizations}
\author{Theo Johnson-Freyd}
\begin{document}

\begin{abstract}
  We generalize the AKSZ construction of topological field theories to allow the target manifolds to have possibly-degenerate up-to-homotopy Poisson structures.  Classical AKSZ theories, which exist for all oriented spacetimes, are described in terms of dioperads.  The quantization problem is posed in terms of extending from dioperads to properads.  We conclude by relating the quantization problem for AKSZ theories on $\RR^{d}$ to the formality of the $\E_{d}$ operad, and conjecture a properadic description of the space of $\E_{d}$ formality quasiisomorphisms.
\end{abstract}

\maketitle

The Alexandrov--Kontsevich--Schwarz--Zaboronsky (AKSZ) construction of classical topological field theories~\cite{MR1432574} is usually understood as follows.  One chooses an oriented $d$-dimensional ``spacetime'' manifold $M$ and a ``target'' dg supermanifold $X$ with a $(d-1)$-shifted symplectic form.  Denote by $\T[1]M = \Maps(\RR^{0|1},M)$ the dg supermanifold whose functions are the de Rham complex $\bigl(\Omega^{\bullet}(M),\partial_{\dR}\bigr)$; then the orientation on $M$ along with the symplectic form on $X$ together give the infinite-dimensional dg supermanifold $\Maps(\T[1]M,X)$ of smooth maps from $\T[1]M$ to $X$ a $(-1)$-shifted symplectic form.  Finite-dimensional $(-1)$-shifted symplectic dg supermanifolds are closely related to oscillating gauged integrals by the BV--BRST picture, and so the AKSZ construction both generalizes and clarifies the problem of defining path integrals for topological field theories.
  The classical AKSZ construction has been precisely formulated within the framework of derived algebraic geometry~\cite{PTVV}.
The corresponding derived  symplectic differential geometry is under active development~\cite{CR2,CR1}, and has been used in~\cite{CMR2,CMR1} to define extended classical field theories in the cobordism framework.

However, the infinite-dimensionality of $\Maps(\T[1]M,X)$ interferes with inverting the symplectic form into a Poisson bracket, and it is the Poisson bracket that is needed to formulate the ``quantum master equation'' controlling measures for the path integral.  Indeed, the algebraic structure of the BV--BRST picture suggests axioms for observables in perturbative quantum field theory~\cite{costellogwilliam}, but the most natural such axioms 
use possibly-degenerate Poisson brackets rather than symplectic forms.  
These considerations motivated the construction in the first part of this paper: we will develop (in the case when $X$ is an infinitesimal neighborhood of a point) a Poisson version of the AKSZ construction.  Our construction is purely obstruction-theoretic, and will not require understanding dg supermanifolds.  

Classical field theories are typically described by the values of all ``tree-level Feynman diagrams,'' whereas a perturbative quantization of such a theory should make sense of Feynman diagrams with arbitrary genus.  We encode this notion of ``classical'' in terms of actions of dioperads (which are like operads but with more complicated trees), and ``quantum'' in terms of actions of properads (which allow graphs with genus);  quantization, which might be obstructed, corresponds to extending a dioperadic action to a properadic action.

 It is known~\cite{costellogwilliam,DAG} that the factorization algebras for observables of topological field theories on $\RR^{d}$ are the same as $\E_{d}$ algebras.  Our construction does not directly produce true factorization algebras at the quantum level, but does produce a ``quasilocal'' approximation of factorization algebras, and we describe in Assertion~\ref{thm.ed} a quasilocal construction of $\E_{d}$ algebras from quantizations of our AKSZ theories.  This is evidence for Conjecture~\ref{conj}, which suggests a homotopy equivalence between the space of $\E_{d}$ formality morphisms and the space of quasilocal homotopy actions on $\Chains(\RR^{d})$ of the properad controlling involutive Frobenius algebras.

\subsection{Outline} In Section~\ref{section.diproperads} we review the notions of \define{dioperad} and \define{properad} and introduce our main characters, the (di/pr)operads controlling shifted-Frobenius algebras and shifted-Poisson infinitesimal manifolds, both up to homotopy.  In Section~\ref{section.frobchains} we construct a canonical contractible space of up-to-homotopy open shifted-Frobenius algebra structures on the chains on a smooth oriented manifold.  In Section~\ref{section.aksz}, we use these ideas to construct classical AKSZ theories with target an up-to-homotopy shifted-Poisson infinitesimal manifold, define their ``path integral quantizations,'' and conjecture that the existence of a \emph{properadic} up-to-homotopy shifted-Frobenius action on $\Chains_{\bullet}(\RR^{d})$ is equivalent to the formality of the $\E_{d}$ operad.

\subsection{Acknowledgements}
I would like to thank C.\ Arias Abad, A.\ Cattaneo, K.\ Costello, G.\ Drummond-Cole, O.\ Gwilliam, J.\ Hirsh, P.\ Mn\"ev, N.\ Reshetikhin, N.\ Rozenblyum, B.\ Vallette, A.\ Vaintrob, and T.\ Willwacher for their conversations related to this article.
I am also grateful to the anonymous referee, who provided particularly valuable comments.
Stages of this project were supported by  the NSF grants DMS-0901431, DMS-1201391, and DMS-1304054.

\section{Dioperads, properads,  shifted Frobenius, and shifted Poisson} \label{section.diproperads}

Dioperads and properads both provide axioms for algebraic structures that involve multilinear maps.  Dioperads, introduced in~\cite{MR1960128}, use only tree-level compositions; properads, from~\cite{MR2320654}, allow compositions for connected directed acyclic graphs.  We will recall the basic theory of both simultaneously.

Fix a field $\KK$ of characteristic $0$.  We will use homological conventions: a \define{chain complex} is a formal direct sum $V = \bigoplus V_{n}$ of vector spaces over $\KK$ with a differential $\partial : V_{n} \to V_{n-1}$ squaring to zero.  The tensor product and sign conventions are the usual ones for chain complexes.  Shifts are implemented by tensoring: $V[n] = V\otimes [n]$, where $\dim\,[n]_{m} = \delta_{n,m}$.  The category of chain complexes is denoted $\DGVect$.

\begin{definition}%[dioperads and properads]
  A \define{directed graph} consists of a finite set of \define{(internal) vertices} $v = (\m_{v},\n_{v})$, where $\m_{v}$ is a finite set of \define{arriving edges} and $\n_{v}$ is a finite set of \define{departing edges},  finite sets $\m_{\Gamma}$ and $\n_{\Gamma}$ of \define{incoming} and \define{outgoing external edges}, and a bijection between $\n_{\Gamma} \sqcup \bigsqcup_{v} \m_{v}$ and $\m_{\Gamma}\sqcup \bigsqcup_{v}\n_{v}$, where $\sqcup$ denotes disjoint union.  A \define{dag}, short for \define{connected directed acyclic graph}, is a directed graph with one component and no oriented cycles.  The \define{genus} of a dag $\Gamma$ is its first Betti number~$\beta(\Gamma)$.  A dag $\Gamma$ is a \define{directed tree} if $\beta(\Gamma) = 0$.  We denote the size of the set $\m_{v}$ by $m_{v} = |\m_{v}|$, and similarly for $\n_{v},\m_{\Gamma},\n_{\Gamma}$.  When we draw dags, we will adopt the convention that edges are always oriented from bottom to top.
    
  Let $\SS$ denote the groupoid of finite sets and bijections.  An \define{$\SS$-bimodule} is a functor $\SS^{\op} \times \SS \to \DGVect$.
  A \define{dioperad (resp.\ properad)} is an $\SS$-bimodule $P$ along with, for each directed tree (resp.\ dag) $\Gamma$ a \define{composition} map:
  $$ \Circ_{\Gamma} : \bigotimes_{v\in \Gamma} P(\m_{v},\n_{v}) \to P(\m_{\Gamma},\n_{\Gamma}) $$
  These compositions maps must be compatible with the $\SS$-bimodule structures in the obvious way.  Moreover, we demand an associativity axiom: if a graph $\Gamma$ is formed from a larger graph $\Gamma'$ by contracting a subgraph $\Gamma''$ to a vertex, then $\Circ_{\Gamma} = \Circ_{\Gamma'} \circ \Circ_{\Gamma''}$.
  
  A properad $P$ is \define{genus-graded} if each $P(\m,\n)$ decomposes as $P(\m,\n) = \bigoplus_{\beta \in \NN} P(\m,\n,\beta)$ and for each dag $\Gamma$ the composition $\Circ_{\Gamma}$ has weight $\beta(\Gamma)$.  A genus-graded properad $P$ is \define{positive} if $P(\m,\emptyset,0) = P(\emptyset,\n,0) = 0$ for all $\m,\n$ and $P(\mathbf 1,\mathbf 1,0) = \KK \,\object{onearrow,draw}\,$, where $\mathbf 1 = \{*\}$ is a set of size $1$, and $\,\object{onearrow,draw}\, = \Circ_{\,\tikz[baseline=(basepoint)]\path[onearrow,draw](0,-2pt) -- (0,6pt) (0,0) coordinate (basepoint);\,}(1)$ denotes the unit for composition.  A genus-graded properad $P$ is \define{locally finite-dimensional} if each $P(\m,\n,\beta)$ is a bounded finite-dimensional chain complex.  A dioperad $P$ is \define{positive} if $P(\m,\emptyset) = P(\emptyset,\n) = 0$ and $P(\mathbf 1,\mathbf 1) = \KK$, and \define{locally finite-dimensional} if each $P(\m,\n)$ is bounded finite-dimensional.
\end{definition}
  
  The categories of dioperads, properads, and genus-graded properads each have  model category structures in which the fibrations and weak equivalences are precisely the fibrations and weak equivalences of underlying $\SS$-bimodules, i.e.\ the fibrations are the surjective homomorphisms, and the weak equivalences are the homomorphisms that are isomorphisms on homology.  The cofibrations are defined by the left lifting property.  See~\cite[Appendix~A]{MR2572248} for details.

The \define{endomorphism} (di/pr)operad of a chain complex $V$ is $\End(V)(\m,\n) = \hom(V^{\otimes \m},V^{\otimes\n})$.  An \define{action} of a (di/pr)operad $P$ on a chain complex $V$ is a homomorphism $P \to \End(V)$; $V$ is then a \define{$P$-algebra}.
There is a forgetful functor from properads to dioperads; its left adjoint freely generates the \define{universal enveloping properad} from a dioperad.  
  For example, $\operatorname{Forget}$ takes endomorphism properads to endomorphism dioperads, and so actions of a dioperad are the same as actions of its universal enveloping properad.  
  An action of $P$ \define{up to homotopy} is an action of any cofibrant replacement of $P$; model category theory ensures that (up to homotopy equivalence) the choice of cofibrant replacement doesn't matter.

\begin{remark}
It will be technically convenient, although not strictly necessary, to restrict our attention just to actions of positive locally finite-dimensional (di/pr)operads, which are automatically augmented.  To drop this restriction causes no conceptual changes, but makes formulas more complicated: in Proposition~\ref{prop.koszulity}, one must use the theory of ``curved'' Koszul duality developed by~\cite{MR2993002}; in Assertion~\ref{thm.ed}, one must take more care with certain limits.  We will use properads to study, among other things, Poisson structures on infinitesimal manifolds, and positive properads can only describe Poisson structures that vanish at the origin.  This is no real loss of generality: one may always multiply a Poisson structure by a formal ``coupling constant,'' interpreted as the addition of one extra dimension to the manifold, and thereby construct a Poisson manifold with Poisson structure vanishing at the origin.
\end{remark}

\begin{definition}%[$\Frob_{d}$]
  The (di/pr)operad $\Frob_{d}$ of \define{$d$-shifted commutative open and coopen Frobenius algebras} has generators
  $$\underbrace{
  \,\tikz[baseline=(basepoint)]{ 
    \path (0,0) coordinate (basepoint) (0,4pt) node[dot] {};
    \draw[onearrow](0,-6pt) -- (0,4pt);
    \draw[onearrow](0,4pt) -- (-8pt,14pt);
    \draw[onearrow](0,4pt) -- (8pt,14pt);
  }\,
  = 
  \,\tikz[baseline=(basepoint)]{ 
    \path (0,0) coordinate (basepoint) (0,4pt) node[dot] {};
    \draw[onearrow](0,-6pt) -- (0,4pt);
    \draw[](0,4pt) .. controls +(8pt,4pt) and +(8pt,-4pt) .. (-8pt,14pt);
    \draw[](0,4pt) .. controls +(-8pt,4pt) and +(-8pt,-4pt) .. (8pt,14pt);
  }\,
  ,}_{\text{homological degree $0$}} \quad 
  \underbrace{\,\tikz[baseline=(basepoint)]{ 
    \path (0,0) coordinate (basepoint) (0,4pt) node[dot] {};
    \draw[onearrow](-8pt,-6pt) -- (0,4pt);
    \draw[onearrow](8pt,-6pt) -- (0,4pt);
    \draw[onearrow](0,4pt) -- (0,14pt);
  }\,
  = (-1)^{d}
  \,\tikz[baseline=(basepoint)]{ 
    \path (0,0) coordinate (basepoint) (0,4pt) node[dot] {};
    \draw[] (-8pt,-6pt) .. controls +(8pt,4pt) and +(8pt,-4pt) .. (0,4pt);
    \draw[] (8pt,-6pt) .. controls +(-8pt,4pt) and +(-8pt,-4pt) .. (0,4pt);
    \draw[onearrow](0,4pt) -- (0,14pt);
  }\,}_{\text{homological degree $-d$}}
  $$
  and relations
  \begin{gather*}
  \,\tikz[baseline=(basepoint)]{ 
    \path (0,3pt) coordinate (basepoint) (0,1pt) node[dot] {} (-8pt,11pt) node[dot] {};
    \draw[onearrow](0,-9pt) -- (0,1pt);
    \draw[onearrow](0,1pt) -- (-8pt,11pt);
    \draw[onearrow](-8pt,11pt) -- (-12pt,21pt);
    \draw[onearrow](-8pt,11pt) -- (0pt,21pt);
    \draw[onearrow](0,1pt) -- (12pt,21pt);
  }\,
  =
  \,\tikz[baseline=(basepoint)]{ 
    \path (0,3pt) coordinate (basepoint) (0,1pt) node[dot] {} (8pt,11pt) node[dot] {};
    \draw[onearrow](0,-9pt) -- (0,1pt);
    \draw[onearrow](0,1pt) -- (8pt,11pt);
    \draw[onearrow](0,1pt) -- (0pt,21pt);
    \draw[onearrow](8pt,11pt) -- (-12pt,21pt);
    \draw[onearrow](8pt,11pt) -- (12pt,21pt);
  }\,
  =
  \,\tikz[baseline=(basepoint)]{ 
    \path (0,3pt) coordinate (basepoint) (0,1pt) node[dot] {} (8pt,11pt) node[dot] {};
    \draw[onearrow](0,-9pt) -- (0,1pt);
    \draw[onearrow](0,1pt) -- (8pt,11pt);
    \draw[onearrow](0,1pt) -- (-12pt,21pt);
    \draw[onearrow](8pt,11pt) -- (0pt,21pt);
    \draw[onearrow](8pt,11pt) -- (12pt,21pt);
  }\,
  ,\quad (-1)^{d}
  \,\tikz[baseline=(basepoint)]{ 
    \path (0,3pt) coordinate (basepoint) (8pt,1pt) node[dot] {} (0,11pt) node[dot] {};
    \draw[onearrow](-12pt,-9pt) -- (0,11pt);
    \draw[onearrow](0pt,-9pt) -- (8pt,1pt);
    \draw[onearrow](12pt,-9pt) -- (8pt,1pt);
    \draw[onearrow](8pt,1pt) -- (0,11pt);
    \draw[onearrow](0,11pt) -- (0,21pt);
  }\,
  =
  \,\tikz[baseline=(basepoint)]{ 
    \path (0,3pt) coordinate (basepoint) (8pt,1pt) node[dot] {} (0,11pt) node[dot] {};
    \draw[onearrow](0pt,-9pt) -- (0,11pt);
    \draw[onearrow](12pt,-9pt) -- (8pt,1pt);
    \draw[onearrow](-12pt,-9pt) -- (8pt,1pt);
    \draw[onearrow](8pt,1pt) -- (0,11pt);
    \draw[onearrow](0,11pt) -- (0,21pt);
  }\,
  =
  \,\tikz[baseline=(basepoint)]{ 
    \path (0,3pt) coordinate (basepoint) (-8pt,1pt) node[dot] {} (0,11pt) node[dot] {};
    \draw[onearrow](12pt,-9pt) -- (0,11pt);
    \draw[onearrow](-12pt,-9pt) -- (-8pt,1pt);
    \draw[onearrow](0pt,-9pt) -- (-8pt,1pt);
    \draw[onearrow](-8pt,1pt) -- (0,11pt);
    \draw[onearrow](0,11pt) -- (0,21pt);
  }\,
  ,\\
  \,\tikz[baseline=(basepoint)]{ 
    \path (0,3pt) coordinate (basepoint) (0,1pt) node[dot] {} (0,11pt) node[dot] {};
    \draw[onearrow](-8pt,-9pt) -- (0,1pt);
    \draw[onearrow](8pt,-9pt) -- (0,1pt);
    \draw[onearrow](0,1pt) -- (0,11pt);
    \draw[onearrow](0,11pt) -- (-8pt,21pt);
    \draw[onearrow](0,11pt) -- (8pt,21pt);
  }\,
  =
  \,\tikz[baseline=(basepoint)]{ 
    \path (0,3pt) coordinate (basepoint) (-8pt,1pt) node[dot] {} (8pt,11pt) node[dot] {};
    \draw[onearrow](-8pt,-9pt) -- (-8pt,1pt);
    \draw[onearrow](8pt,-9pt) -- (8pt,11pt);
    \draw[onearrow](-8pt,1pt) -- (8pt,11pt);
    \draw[onearrow](-8pt,1pt) -- (-8pt,21pt);
    \draw[onearrow](8pt,11pt) -- (8pt,21pt);
  }\,
  =
  \,\tikz[baseline=(basepoint)]{ 
    \path (0,3pt) coordinate (basepoint) (8pt,1pt) node[dot] {} (-8pt,11pt) node[dot] {};
    \draw[onearrow](-8pt,-9pt) -- (-8pt,11pt);
    \draw[onearrow](8pt,-9pt) -- (8pt,1pt);
    \draw[onearrow](8pt,1pt) -- (-8pt,11pt);
    \draw[onearrow](-8pt,11pt) -- (-8pt,21pt);
    \draw[onearrow](8pt,1pt) -- (8pt,21pt);
  }\,
  =
  \,\tikz[baseline=(basepoint)]{ 
    \path (0,3pt) coordinate (basepoint) (0,1pt) node[dot] {} (0,11pt) node[dot] {};
    \draw[onearrow](0,-9pt) -- (0,1pt);
    \draw[onearrow](0,1pt) -- (0,11pt);
    \draw[onearrow](0,11pt) -- (0,21pt);
    \draw[onearrow](-16pt,-9pt) -- (0,11pt);
    \draw[onearrow](0pt,1pt) -- (-16pt,21pt);
  }\,
  =
  \,\tikz[baseline=(basepoint)]{ 
    \path (0,3pt) coordinate (basepoint) (0,1pt) node[dot] {} (0,11pt) node[dot] {};
    \draw[onearrow](0,-9pt) -- (0,1pt);
    \draw[onearrow](0,1pt) -- (0,11pt);
    \draw[onearrow](0,11pt) -- (0,21pt);
    \draw[onearrow](16pt,-9pt) -- (0,11pt);
    \draw[onearrow](0pt,1pt) -- (16pt,21pt);
  }\,.
  \end{gather*}
  The first  generator makes any $\Frob_{d}$-algebra $V$ into a (noncounital)
   cocommutative  coalgebra; the second  makes $V[-d]$ into a (nonunital) commutative  algebra.  Typical examples of $\Frob_{d}$-algebras are the  homology $\H_{\bullet}(M)$ and shifted cohomology $\H^{d-\bullet}(M)$ of any $d$-dimensional possibly-noncompact oriented manifold $M$.
  
  The properad $\invFrob_{d}$ of \define{involutive} Frobenius algebras has additionally the relation   $
  \,\tikz[baseline=(basepoint)]{
    \path (0,0pt) coordinate (basepoint) (0,1pt) node[dot] {} (0,7pt) node[dot] {};
    \draw (0,-4pt) -- (0,0pt); \draw (0,8pt) -- (0,12pt);
    \draw (0,1pt) .. controls +(4pt,1pt) and +(4pt,-1pt) .. (0,7pt);
    \draw (0,1pt) .. controls +(-4pt,1pt) and +(-4pt,-1pt) .. (0,7pt);
  }\,
  = 0$.  When $d$ is odd, $\invFrob_{d} = \Frob_{d}$ as the involutivity relation holds automatically by a symmetry/antisymmetry argument.  
\end{definition}

The properads $\Frob_{d}$ and $\invFrob_{d}$ are naturally genus-graded by declaring that the generators have genus $0$.  When $d$ is even, $\Frob_{d}(\m,\n,\beta) \cong \KK$ for all $\beta\in\NN$ and $\m,\n \neq \emptyset$.  For all $d$, $\invFrob_{d}(\m,\n,0) \cong \KK$ and $\invFrob_{d}(\m,\n,\beta) = 0$ for $\beta > 0$.  The dioperad $\Frob_{d}$ satisfies $\Frob_{d}(\m,\n) \cong \KK$ for all $\m,\n \neq \emptyset$.  Thus $\Frob_{d}$ and $\invFrob_{d}$ are positive and locally finite-dimensional.

\begin{definition}
  The (di/pr)operad $\LB_{d}$ of \define{$d$-shifted Lie bialgebras}
is generated by  a degree-$(d-1)$ \define{bracket} $\mult$ satisfying signed-symmetry and Jacobi identities and a degree-$(-1)$ \define{cobracket} $\comult$ satisfying symmetry and Jacobi and cocycle identities.  (Our grading convention is that a  Lie bialgebra structure in the usual sense on $V$ is the same as an $\LB_{2}$-algebra structure on $V[-1]$.)
   \end{definition}
   
The properad $\LB_{d}$ is genus-graded by assigning $\mult$ and $\comult$ each genus $0$.  Both the dioperad and properad $\LB_{d}$ are positive and locally finite-dimensional.

We will need the following version of the (co)bar construction:

\begin{definition}
 Let $P$ be a positive locally finite-dimensional genus-graded properad (resp.\ positive locally finite-dimensional dioperad), and let $\bar P = P/P(\mathbf 1,\mathbf 1,0)$ (resp.\ $\bar P = P/P(\mathbf 1,\mathbf 1)$).  Denote its graded linear dual by $\bar P^{*}$.   The \define{bar dual} to $P$ is the positive locally finite-dimensional genus-graded (di/pr)operad  $\DD(P) = \Ff(\bar P^{*}[-1])$ freely generated by the shifted $\SS$-bimodule $\bar P^{*}[-1]$, whose differential (in addition to any dg structure on $P$) is the derivation defined on generators to be dual to $\sum_{\Gamma \text{ with two vertices}} \Circ_{\Gamma}$.
\end{definition}

\begin{proposition}[\cite{MR1960128,MR2320654}] \label{prop.ddish}
  Let $P$ be any positive locally finite-dimensional dioperad or genus-graded properad.  Then $\DD(P)$ is cofibrant.  Moreover, $\h P = \DD(\DD(P))$ is a cofibrant replacement of $P$. \qedhere
\end{proposition}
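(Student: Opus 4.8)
The plan is to follow the classical Koszul-duality pattern adapted to the dioperad and genus-graded properad settings, reducing everything to the two-sided bar resolution. First I would establish the cofibrancy claim: the bar dual $\DD(P) = \Ff(\bar P^*[-1])$ is a free (di/pr)operad on an $\SS$-bimodule, equipped with a differential that is a derivation lowering the genus/weight filtration by the number of internal edges of the generating graphs. Free (di/pr)operads on nonnegatively graded $\SS$-bimodules are cofibrant in the model structures of \cite[Appendix~A]{MR2572248} once one checks that the differential is ``triangular'' with respect to a suitable exhaustive filtration — here the filtration by total number of generators (equivalently, by a weight grading on $\bar P^*$), since $\d$ strictly increases the number of vertices. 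A free object with such a lower-triangular differential is a cell complex built by attaching free cells, hence cofibrant; positivity and local finite-dimensionality of $P$ are exactly what guarantee $\bar P^*$ is well-defined and that $\DD(P)$ is again positive and locally finite-dimensional, so the construction can be iterated.

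Next, for the resolution statement $\h P = \DD(\DD(P)) \simeq P$, the key is the standard ``$\DD\DD$ lemma'': there is a canonical map $\DD(\DD(P)) \to P$ (counit of the bar--cobar adjunction, or the map dual to the inclusion of generators), and one shows it is a quasi-isomorphism of $\SS$-bimodules. The $\SS$-bimodule underlying $\DD(\DD(P))$ carries, in each fixed arity $(\m,\n)$ and fixed genus $\beta$, a complex which is precisely the two-sided cobar--bar complex of $P$. Because $P$ is positive and locally finite-dimensional, each $P(\m,\n,\beta)$ (resp.\ $P(\m,\n)$) is a bounded finite-dimensional complex and there are only finitely many relevant graphs in each arity and genus, so these are finite complexes and no convergence issues arise. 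One then runs the usual acyclicity argument: filter the cobar--bar complex by number of vertices (or by a path-length/weight grading), and identify the associated graded differential with a Koszul-type differential whose homology is concentrated in the ``top'' piece, yielding $P$ itself. This is exactly the content cited from \cite{MR1960128} for dioperads and \cite{MR2320654} for properads; in the genus-graded properad case one additionally notes that the whole argument respects the genus decomposition because $\Circ_\Gamma$ has weight $\beta(\Gamma)$, so it suffices to work genus-by-genus where everything is finite.

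To assemble these into the proposition: having shown $\DD(Q)$ is cofibrant for any positive locally finite-dimensional $Q$, apply this with $Q = \DD(P)$ to get that $\h P = \DD(\DD(P))$ is cofibrant; then the counit $\h P \to P$ is a quasi-isomorphism by the previous paragraph, and a quasi-isomorphism with cofibrant source and arbitrary (in particular, fibrant — everything is fibrant here) target is a cofibrant replacement. One must check the counit is actually a (di/pr)operad homomorphism, not merely an $\SS$-bimodule map, which is formal from the adjunction description.

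The main obstacle I expect is the acyclicity of the two-sided cobar--bar complex in the properadic setting, because the combinatorics of dags with genus is genuinely more delicate than trees: the ``extra differential'' coming from edge contractions can create and destroy genus, and one must set up the filtration so that the associated graded splits as a tensor product over vertices of acyclic two-term complexes (one for each internal edge), exactly as in the operadic Koszulity proof but now indexed by dags. Getting the signs and the $\SS$-equivariance right in this identification, and confirming that the bounded finite-dimensionality hypotheses make the relevant spectral sequence (or explicit contracting homotopy) converge, is where the real work lies; fortunately this is precisely what \cite{MR1960128,MR2320654} carry out, so for our purposes it can be invoked rather than reproved.
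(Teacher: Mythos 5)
The paper gives no proof of this proposition beyond citing Gan and Vallette, and your sketch is a faithful outline of exactly the argument those references carry out: cofibrancy of the quasi-free object $\DD(P)$ via a triangular filtration on generators, and acyclicity of the bar--cobar counit $\DD(\DD(P))\to P$ via a vertex-count filtration, with positivity and local finite-dimensionality ensuring the duals and spectral sequences behave. Since you correctly identify the delicate properadic acyclicity step and defer it to the same sources the paper invokes, your proposal matches the paper's approach.
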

 The universal enveloping properad functor is known not to be exact~\cite[Theorem~47]{MR2560406}, so {a priori} the universal enveloping properad of the cofibrant replacement of a dioperad $P$ may not be a cofibrant replacement of the universal enveloping properad of $P$.  When we need to distinguish, we will use superscripts, e.g.\ $\h^{\di}\Frob_{d}$ or $\h^{\pr}\Frob_{d}$.

\begin{proposition}\label{prop.koszulity}
  There is a canonical cofibrant replacement of dioperads $\DD(\Frob_{d}) \to \LB_{d}$.  There is a canonical cofibrant replacement of properads $\DD(\invFrob_{d}) \to \LB_{d}$.
\end{proposition}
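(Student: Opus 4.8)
The plan is to recognize both statements as instances of Koszul duality, with $\LB_{d}$ appearing as the Koszul dual of $\Frob_{d}$ (as a dioperad) and of $\invFrob_{d}$ (as a properad). First I would observe that the (di/pr)operads in play are \emph{quadratic}: every relation displayed in the Definitions above is a linear combination of composites of exactly two generators, so $\Frob_{d}=\Ff(E)/(R)$ and $\invFrob_{d}=\Ff(E)/(R\oplus R')$, where $E$ is the $\SS$-bimodule with a degree-$0$ cocommutative generator in biarity $(\mathbf 1,\mathbf 2)$ and a degree-$(-d)$ commutative generator in biarity $(\mathbf 2,\mathbf 1)$, $R$ is the span of the (co)associativity and Frobenius relations (each carried by a two-vertex directed \emph{tree}), and $R'$ is the span of the single genus-$1$ two-vertex composite ``multiply after comultiply, joined along both edges'' that occurs only for $\invFrob_{d}$. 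Dualizing and shifting down by one, the generating $\SS$-bimodules $\overline{\Frob_{d}}^{*}[-1]$ and $\overline{\invFrob_{d}}^{*}[-1]$ of $\DD(\Frob_{d})$ and $\DD(\invFrob_{d})$ have, in their lowest arities, a degree-$(-1)$ generator in biarity $(\mathbf 1,\mathbf 2)$ and a degree-$(d-1)$ generator in biarity $(\mathbf 2,\mathbf 1)$ --- precisely the cobracket and bracket of $\LB_{d}$ --- together with the duals of all the remaining, higher-arity components. The whole of this bookkeeping is uniform in $d$; the parameter enters only through signs and the choice of (anti)symmetry of the generators, so I will not track it separately.

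Next I would construct the comparison map: send the two low generators of $\DD(\Frob_{d})$ (resp.\ $\DD(\invFrob_{d})$) to the cobracket and bracket of $\LB_{d}$, and send every higher generator to zero; since $\DD$ is free, this extends uniquely to a map of (di/pr)operads. It is a chain map because the bar differential of the generator dual to a class $x\in\overline{P}$ is the sum over the two-vertex composites into which $x$ codecomposes, and under the assignment above the codecompositions of the arity-$3$ classes and of the biarity-$(\mathbf 2,\mathbf 2)$ class become the Jacobi, co-Jacobi, and Drinfeld cocycle combinations, all of which vanish in $\LB_{d}$; in the properadic case the lone genus-$1$ two-vertex composite codecomposes only into $R'$, which maps to $0$. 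Equivalently, $\LB_{d}$ with its zero differential is the graded linear dual of the Koszul-dual coproperad, and the map just constructed is the canonical one. So both statements come down to a single assertion: \emph{$\Frob_{d}$ is a Koszul dioperad and $\invFrob_{d}$ is a Koszul properad}.

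To prove this I would verify Koszulity through an explicit normal form. For $\Frob_{d}$, orienting the Frobenius and (co)associativity relations so as to push every multiplication below every comultiplication produces a confluent rewriting system whose normal form in biarity $(\m,\n)$ is the unique connected tree ``multiply the inputs down to a single strand, then comultiply up to the outputs'' --- matching $\Frob_{d}(\m,\n)\cong\KK$ of the Definition; this is a quadratic Gr\"obner basis, so by the PBW criterion $\Frob_{d}$ is Koszul, and combined with the identification of the quadratic dual this gives $\DD(\Frob_{d})\simeq\LB_{d}$. (This case is also Gan's theorem~\cite{MR1960128}.) For $\invFrob_{d}$ I would run the same rewriting system inside Vallette's framework of Koszul duality for properads~\cite{MR2320654}: the genus-$0$ normal forms are unchanged, while $R'$ rewrites the one extra genus-$1$ monomial (the loop) to zero, leaving $\invFrob_{d}(\m,\n,0)\cong\KK$ and $\invFrob_{d}(\m,\n,\beta)=0$ for $\beta>0$; Koszulity then follows as before. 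I should stress that the properad statement does \emph{not} follow formally from the dioperad one: although $\DD(\invFrob_{d})$ is readily identified with the universal enveloping properad of $\DD(\Frob_{d})$ (the sole genus-$1$ two-vertex composition being killed in $\invFrob_{d}$), the universal enveloping functor fails to be exact --- the phenomenon flagged just before this proposition --- so the dioperadic quasi-isomorphism $\DD(\Frob_{d})\to\LB_{d}$ need not stay one after passage to properads.

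The hard part is therefore the properad-level Koszulity: confluence of the rewriting system over connected directed \emph{acyclic graphs} rather than trees, where one must rule out surviving higher-genus ambiguities. It is precisely here that the genus-$1$ involutivity relation $R'$ is indispensable: without it, $\Frob_{d}$ as a properad acquires a copy of $\KK$ in every genus and is the wrong Koszul dual for $\LB_{d}$. Once confluence --- equivalently, shellability, hence Cohen--Macaulayness, of the associated decomposition posets --- is in place, the PBW/Gr\"obner property yields acyclicity of the Koszul complex, hence that the comparison maps constructed above are quasi-isomorphisms; since $\DD(\Frob_{d})$ and $\DD(\invFrob_{d})$ are cofibrant by Proposition~\ref{prop.ddish}, these are the asserted cofibrant replacements.
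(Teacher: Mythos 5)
Your proposal is correct and follows essentially the same route as the paper: identify $\LB_{d}$ as the quadratic dual of the dioperad $\Frob_{d}$ and of the properad $\invFrob_{d}$, then invoke Koszulity of these (di/pr)operads, for which the paper simply cites Gan~\cite{MR1960128} and Vallette~\cite{MR2320654}. The additional detail you supply --- the explicit comparison map, the PBW/rewriting verification of Koszulity, and the caveat that the properadic statement does not follow formally from the dioperadic one because the universal enveloping functor is not exact --- is consistent with what the paper asserts or cites.
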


\begin{proof}
  There is a good theory of quadratic and Koszul (di/pr)operads, which we will not review; details are available in~\cite{MR1960128,MR2320654}.  In particular, results therein prove that 
  for all $d$, the dioperads $\Frob_{d}$ and $\LB_{d}$ and the properads $\invFrob_{d}$ and $\LB_{d}$ are Koszul.
  Moreover, the dioperads $\Frob_{d}$ and $\LB_{d}$ are quadratic duals, as are the properads $\invFrob_{d}$ and $\LB_{d}$.  Together these facts imply Proposition~\ref{prop.koszulity}.
\end{proof}

By construction, the dioperad $\DD(\Frob_{d})$ and the properad $\DD(\invFrob_{d})$ have the same presentation, and so the properad $\DD(\invFrob_{d})$ is the universal enveloping properad of the dioperad $\DD(\Frob_{d})$.  We will denote both by the name $\shLB_{d}$.  In detail, $\shLB_{d}$ is has a generating corolla with $m>0$ inputs and $n>0$ outputs in homological degree $\deg\bigl(\tikz[baseline=(basepoint)] {
    \path (0,0) coordinate (basepoint) (0,4pt) node[dot] {} (0,-3pt) node{$\scriptstyle \dots$} (0,11pt) node{$\scriptstyle \dots$};
    \draw[onearrow] (-8pt,-3pt) -- (0,4pt);
    \draw[onearrow] (8pt,-3pt) -- (0,4pt);
    \draw[onearrow] (0,4pt) -- (-8pt,11pt);
    \draw[onearrow] (0,4pt) -- (8pt,11pt);
  }\bigr) = d(m-1)-1$ for each $(m,n)$; such a  generator transforms trivially under permutations of the outgoing edges, and either trivially, if $d$ is even, or by the sign representation, if $d$ is odd, under permutations of the incoming edges.  The differential $\partial\bigl(\tikz[baseline=(basepoint)] {
    \path (0,0) coordinate (basepoint) (0,4pt) node[dot] {} (0,-3pt) node{$\scriptstyle \dots$} (0,11pt) node{$\scriptstyle \dots$};
    \draw[onearrow] (-8pt,-3pt) -- (0,4pt);
    \draw[onearrow] (8pt,-3pt) -- (0,4pt);
    \draw[onearrow] (0,4pt) -- (-8pt,11pt);
    \draw[onearrow] (0,4pt) -- (8pt,11pt);
  }\bigr)$ is a sum over trees with two vertices, each labeled by the corresponding generating corolla.

Suppose that $V$ is an $\shLB_{d}$-algebra.  Think of the completed symmetric algebra $\widehat\Sym(V)$ as the algebra of functions on an infinitesimal manifold (so $V$ is the space of linear functions for a chosen coordinate chart).  Each generating corolla $\tikz[baseline=(basepoint)] {
    \path (0,0) coordinate (basepoint) (0,4pt) node[dot] {} (0,-3pt) node{$\scriptstyle \dots$} (0,11pt) node{$\scriptstyle \dots$};
    \draw[onearrow] (-8pt,-3pt) -- (0,4pt);
    \draw[onearrow] (8pt,-3pt) -- (0,4pt);
    \draw[onearrow] (0,4pt) -- (-8pt,11pt);
    \draw[onearrow] (0,4pt) -- (8pt,11pt);
  }$ with $m$ inputs and $n$ outputs defines a map $V^{\otimes m} \to \Sym^{n}(V)$, which we can extend to map $L_{(m)}^{(n)} : \widehat\Sym(V)^{\otimes m} \to \widehat\Sym(V)$ by declaring that it is a derivation in each variable.  Let $L_{(m)} = \sum_{n} \frac1{n!} L_{(m)}^{(n)}$.  Direct computation proves:

\begin{proposition}%[$\mathrm{L}_{\infty}$ structure for $\shLB_{d}$-algebras]
\label{prop.poisd}
  Let $V$ be an $\shLB_{d}$-algebra, and define $m$-linear multiderivations $L_{(m)}%: \widehat\Sym(V)^{\otimes m}\to \widehat\Sym(V)
  $ as above.  The $L_{(m)}$s together make $\widehat\Sym(V)[d-1]$ into a flat $\Lie_{\infty}$-algebra.  
  Conversely, any flat $\Lie_{\infty}$-algebra structure on $\widehat\Sym(V)[d-1]$ for which all operations are continuous multiderivations that vanish at the origin makes $V$ into an $\shLB_{d}$-algebra. \qedhere
\end{proposition}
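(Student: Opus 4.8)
The plan is to unwind both sides into the same linear-algebraic package and then match, equation by equation, the $\Lie_\infty$ relations on one side with the defining differential of $\shLB_{d}$ on the other.

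First I would invoke the explicit presentation of $\shLB_{d}=\DD(\Frob_{d})$ recalled above: as a graded (di/pr)operad it is free on corollas $C_{m,n}$, one for each $m,n\geq 1$ with $(m,n)\neq(1,1)$, in homological degree $d(m-1)-1$, fixed by permutations of the outputs and fixed up to the sign of the input permutation (trivially if $d$ is even), with differential sending $C_{m,n}$ to the sum $\sum_{T}\Circ_{T}$ over two-vertex trees $T$ with $m$ external inputs and $n$ external outputs whose vertices are labeled by the corresponding corollas. Hence an $\shLB_{d}$-algebra structure on a chain complex $(V,\partial_{V})$ is exactly a family of $\SS$-equivariant maps $\mu_{m,n}=\rho(C_{m,n}):V^{\otimes m}\to V^{\otimes n}$ of degree $d(m-1)-1$ — equivariance forcing $\mu_{m,n}$ to land in $\Sym^{n}(V)$ and to be (anti)symmetric in its inputs — subject only to the quadratic identities $[\partial_{V},\mu_{m,n}]=\rho(\partial C_{m,n})=\sum_{T}\rho(\Circ_{T})$; compatibility with $\Circ$ is automatic since the source is free on the $C_{m,n}$, and only two-vertex \emph{trees} (rather than higher-genus two-vertex graphs) enter, since the generators of $\shLB_{d}$ carry genus $0$. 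On the other side, a continuous multiderivation of $\widehat\Sym(V)$ is determined by its restriction to $V^{\otimes m}$, and ``vanishing at the origin'' forces that restriction into the augmentation ideal $\widehat\Sym^{\geq 1}(V)$; splitting it into $\Sym^{n}$-summands recovers maps $V^{\otimes m}\to\Sym^{n}(V)$, and $L_{(m)}=\sum_{n}\frac1{n!}L_{(m)}^{(n)}$ is by construction the unique continuous multiderivation restricting to $\sum_{n}\frac1{n!}\mu_{m,n}$. So a flat $\Lie_\infty$-structure of the prescribed type on $W:=\widehat\Sym(V)[d-1]$ is again precisely the data $\{\mu_{m,n}\}$ together with the internal differential $\partial_{V}$, now subject to the generalized Jacobi identities; flatness (the absence of a $0$-ary curvature operation) is automatic, every operation having at least one input, and the shift $[d-1]$ is exactly the one making $L_{(m)}$, of degree $d(m-1)-1$ on $\widehat\Sym(V)$, acquire the degree $m-2$ demanded of an $m$-ary $\Lie_\infty$ operation, while the attendant d\'ecalage turns the corolla's input (anti)symmetry into the graded symmetry of $l_{m}$.

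The crux is then the observation that, since each $l_{m}$ is a multiderivation, every generalized Jacobi combination $\sum\pm l_{j}(l_{i}(\dots),\dots)$ is again a multiderivation — the individually non-derivation cross terms cancel in the sum, just as the Jacobiator of a Poisson bracket is a triderivation — so each Jacobi relation at arity $k$ is a $k$-ary multiderivation identity, hence holds iff it holds on $V^{\otimes k}$. I would compute that restriction directly: expanding $l_{j}\bigl(l_{i}(x_{1},\dots,x_{i}),x_{i+1},\dots,x_{i+j-1}\bigr)$ with all $x_{k}\in V$, the derivation $l_{j}$ strikes exactly one tensor factor of each $\Sym^{a}$-summand of $l_{i}(x_{1},\dots,x_{i})$; summing over which factor it strikes and using the output symmetry of $\mu_{i,a}$ to collapse the $a$ equal terms converts the coefficient $\frac1{a!}$ into $\frac1{(a-1)!}$ and yields exactly the operator ``contract one output of $\mu_{i,a}$ into one input of $\mu_{j,b}$ and multiply the surviving $a-1$ and $b-1$ outputs together in $\widehat\Sym(V)$'' — that is, $\rho$ of the two-vertex tree with vertices $C_{i,a}$ and $C_{j,b}$. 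The shuffle sum over $x_{1},\dots,x_{i+j-1}$ in the Jacobi relation corresponds to the sum over distributions of the $m=i+j-1$ external inputs between the two vertices; summing over $i+j=m+1$ and over output splittings $a+b=n+1$ reproduces the whole $\sum_{T}\rho(\Circ_{T})$; and the summands in which $l_{1}$ enters through its linear part $\partial_{V}$ assemble into the term $[\partial_{V},\mu_{m,n}]$ (there is no $C_{1,1}$, so nothing competes with $\partial_{V}$ for the linear part of $l_{1}$, and the arity-$1$, $\Sym^{1}$ Jacobi relation is just $\partial_{V}^{2}=0$). Reading this off $\Sym^{n}$-component by $\Sym^{n}$-component and arity by arity shows the generalized Jacobi identities for $\{l_{k}\}$ to be literally the chain-map equations $[\partial_{V},\mu_{m,n}]=\rho(\partial C_{m,n})$; running the dictionary backwards — recovering $\mu_{m,n}$ as $n!$ times the $\Sym^{n}$-part of $l_{m}|_{V^{\otimes m}}$ — gives the converse, every flat $\Lie_\infty$-structure of this kind arising from a unique $\shLB_{d}$-algebra structure on $V$.

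The step I expect to be the real obstacle is the sign bookkeeping threaded through the previous paragraphs: one must verify that the Koszul signs produced by the graded Leibniz rule, after conjugating by the $[d-1]$ shift, match the signs in the $\Lie_\infty$ relations, simultaneously tracking the $(-1)^{d}$ in the defining relations of $\Frob_{d}$, the convention for the $\SS$-action on $C_{m,n}$ when $d$ is odd, and the pairwise cancellation of cross terms that makes a Jacobi combination a multiderivation at all. I would organize this by passing to the desuspension — a $\Lie_\infty$-structure on $W$ is an ordinary degree $-1$ square-zero coderivation of the reduced symmetric coalgebra $\bigoplus_{n\geq 1}\Sym^{n}(W[1])$, and a single d\'ecalage then absorbs all signs, reducing the verification to a finite check on two-vertex trees. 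Everything else — convergence of $\sum_{n}\frac1{n!}L_{(m)}^{(n)}$ in the completed symmetric algebra, freeness of $\shLB_{d}$ on the corollas, and the fact that the higher-genus part of $\shLB_{d}$ imposes no conditions beyond the tree-level ones — is immediate from the setup recalled above.
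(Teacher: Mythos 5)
Your proposal is correct and is essentially the paper's argument: the paper offers no proof beyond the phrase ``direct computation,'' and your unwinding — freeness of $\shLB_{d}=\DD(\Frob_{d})$ on the corollas with differential given by two-vertex trees, extension to multiderivations determined by their restriction to $V^{\otimes m}$, the degree check $m-2+(m-1)(d-1)=d(m-1)-1$, and the matching of generalized Jacobi identities with $[\partial_V,\mu_{m,n}]=\rho(\partial C_{m,n})$ — is exactly that computation spelled out. The sign bookkeeping you flag is indeed the only remaining labor, and your plan to handle it via d\'ecalage to a square-zero coderivation is the standard and appropriate device.
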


A \define{strict $\Pois_{d}$-algebra} is a dg commutative algebra $A$ along with a biderivation $A\otimes A \to A$ making $A[d-1]$ into a dg Lie algebra~\cite{costellogwilliam}.    Just as the notion of $\Lie_{\infty}$-algebra provides a homotopical weakening of the notion of Lie algebra, Proposition~\ref{prop.poisd} provides a homotopical weakening of strict $\Pois_{d}$ infinitesimal manifolds.  We therefore define:

\begin{definition}
  A \define{semistrict homotopy $\Pois_{d}$ infinitesimal manifold} is an $\shLB_{d}$-algebra.
\end{definition} 

The word ``semistrict'' denotes that we do not weaken the Leibniz rule.  Semistrict homotopy $\Pois_{d}$ manifolds were first introduced, with different degree conventions, in~\cite{CFL}.

The \emph{properad} $\DD(\Frob_{0})$ will play a central role in the quantum version of our story.  It is not $\shLB_{0}$.  Rather:

\begin{proposition}[\cite{MR2591885}]  \label{shbdfthm}
$\DD(\Frob_{0})$ is a positive locally finite-dimensional genus graded properad with generators $\gamma_{m,n,\beta} \in \DD(\Frob_{0})(m,n,\beta)$ for all $(m,n,\beta) \in \NN^{3}$ with $m,n \neq 0$ and $(m,n,\beta) \neq (1,1,0)$.  An action of $\DD(\Frob_{0})$ on $V$ is the same as a continuous $\KK\[\hbar\]$-linear degree-$(-1)$ operation $\Delta = o(1)$ on $\widehat\Sym(V)\[\hbar\] = \widehat\Sym(V\oplus \KK\hbar)$ that vanishes at the origin, annihilates $1$, agrees with an $m$th-order differential operator modulo $\hbar^{m}$, and satisfies the Maurer--Cartan equation $(\partial_{V} + \Delta)^{2} = 0$.  Specifically, one extends the action of $\gamma_{m,n,\beta}$ as an $m$th-order differential operator, and sets $\Delta = \sum_{m,n,\beta} \hbar^{\beta+m-1}\gamma_{m,n,\beta}$.  \qedhere
\end{proposition}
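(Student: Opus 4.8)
The plan is to compute the bar dual $\DD(\Frob_0)$ directly from its definition — there is no Koszulity shortcut, and indeed the genus-graded properad $\DD(\Frob_0)$ is far larger than $\shLB_0 = \DD(\invFrob_0)$, reflecting that $\Frob_0$ carries nonzero higher-genus pieces — and then to translate the resulting free properad, with its quadratic differential, into the language of differential operators on $\widehat\Sym(V)$. First I would pin down the underlying $\SS$-bimodule. Since $d=0$ is even, the remarks after the definition of $\Frob_d$ give $\Frob_0(\m,\n,\beta)\cong\KK$, with trivial $\SS$-action, for all $m,n\ge 1$ and all $\beta$ (the associativity axiom together with (co)commutativity collapses the class of every connected directed acyclic graph of fixed genus and fixed leg numbers to one line). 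Hence $\bar P$, the quotient of $\Frob_0$ by its unit $\Frob_0(\mathbf 1,\mathbf 1,0)$, is $\KK$ — trivial action, homological degree $0$ — in each arity $(m,n,\beta)$ with $m,n\ge 1$ and $(m,n,\beta)\ne(1,1,0)$, and zero otherwise; so is $\bar P^{*}$. Therefore $\DD(\Frob_0)=\Ff(\bar P^{*}[-1])$ is the free genus-graded properad on a single generating corolla $\gamma_{m,n,\beta}$ — homological degree $-1$, genus $\beta$, trivial symmetry — for each such triple, which is the first sentence of the Proposition; positivity and local finite-dimensionality are inherited from $\Frob_0$ (already recorded in Proposition~\ref{prop.ddish}).

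Next I would write out the differential, which by definition carries $\gamma_{m,n,\beta}$ to the element dual to $\sum_\Gamma\Circ_\Gamma$ over connected two-vertex dags $\Gamma$ of that total arity; such a $\Gamma$ glues an $(m_1,n_1,\beta_1)$-corolla to an $(m_2,n_2,\beta_2)$-corolla along $k\ge 1$ internal edges, with $m_1+m_2-k=m$, $n_1+n_2-k=n$, and $\beta_1+\beta_2+(k-1)=\beta$. Because each $\Circ_\Gamma$ on $\Frob_0$ is the identity of $\KK$ (the composite of canonical generators along a connected graph is the canonical generator of the target), every structure constant of the differential is $\pm 1$, the signs being the Koszul signs forced by the shift $[-1]$ and the chosen orderings of legs. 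I would then install the dictionary between such data and differential operators: a degree-$(-1)$ map $V^{\otimes m}\to\widehat\Sym(V)$ symmetric in its $m$ arguments — equivalently a family of maps $V^{\otimes m}\to\Sym^{n}(V)$, $n\ge 1$ — is the same thing as an $m$th-order differential operator on $\widehat\Sym(V)$ that annihilates $1$ and vanishes at the origin, via the ``derivation in each of $m$ slots'' extension (built exactly as the $L_{(m)}$ of Proposition~\ref{prop.poisd}, with $\Sym$ in place of $\Lambda$), the order filtration recovering the family in the other direction. Given an action $\gamma_{m,n,\beta}\mapsto(V^{\otimes m}\to\Sym^{n}V)$ I would introduce a formal even variable $\hbar$ and set $\Delta=\sum_{m,n,\beta}\hbar^{\beta+m-1}\gamma_{m,n,\beta}$ on $\widehat\Sym(V)\[\hbar\]\cong\widehat\Sym(V\oplus\KK\hbar)$. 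The decisive arithmetic is that the exponent is additive under two-vertex composition, $(\beta_1+m_1-1)+(\beta_2+m_2-1)=\beta+m-1$, so the $\hbar$-weight of each Feynman-diagram contribution matches the genus-weight of the corresponding properad composition; it follows readily that $\Delta$ has all the properties in the statement — degree $-1$, continuous and $\KK\[\hbar\]$-linear, annihilating $1$, vanishing at the origin, the normalization ``$\Delta=o(1)$'' (precisely the effect of excluding the arity $(1,1,0)$), and, modulo $\hbar^{m}$, agreement with an $m$th-order operator, since only the finitely many $\gamma_{m',n,\beta}$ with $\beta+m'-1\le m-1$, hence $m'\le m$, survive there.

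Finally I would match the two differentials to obtain the master equation. Because $\DD(\Frob_0)$ is free as a graded properad, a homomorphism $\DD(\Frob_0)\to\End(V)$ is nothing but a choice of the maps $\gamma_{m,n,\beta}$ subject to the single requirement that it be a chain map, i.e.\ that the image of the two-vertex sum above equal the $\End(V)$-differential of the image of $\gamma_{m,n,\beta}$; summing these identities against $\hbar^{\beta+m-1}$ and using additivity of the exponent, the left-hand side assembles into $\frac12[\Delta,\Delta]$ — the gluing of two corollas along $k$ edges being exactly the chain-rule expansion of the composite of two differential operators, one contraction per matched output/input slot — and the right-hand side into $\partial_V\Delta+\Delta\partial_V$, so the action condition becomes $(\partial_V+\Delta)^2=0$; conversely, expanding a solution in powers of $\hbar$ and in operator order recovers the maps $\gamma_{m,n,\beta}$, so the correspondence is a bijection. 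I expect the real work to lie in this last step together with the sign bookkeeping: one must verify that the Koszul signs coming from $[-1]$ and from ordering the legs of the two-vertex graphs agree with the signs in the graded commutator of degree-$(-1)$ differential operators, and that the dictionary ``symmetric maps $V^{\otimes m}\to\Sym^{n}V$'' versus ``$m$th-order operators annihilating constants and vanishing at the origin'' is genuinely compatible with both composition structures on the completed, $\hbar$-adically topologized algebra $\widehat\Sym(V)\[\hbar\]$, so that the infinite sum defining $\Delta$ converges and $[\Delta,\Delta]$ is computed term by term. Granting these points, the steps above yield the Proposition; the full argument is in~\cite{MR2591885}.
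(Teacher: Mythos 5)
The paper does not actually prove this proposition: it is quoted from \cite{MR2591885} and the \qedsymbol{} is attached to the statement itself. Your reconstruction is the standard argument one would expect that reference to contain, and it is essentially correct: the identification $\Frob_{0}(\m,\n,\beta)\cong\KK$ for all $m,n\geq 1$ and all $\beta$ is exactly what the paper records for even $d$, the degree count ($\gamma_{m,n,\beta}$ in degree $-1$ after the shift $[-1]$) matches the paper's conventions, and the decisive arithmetic $(\beta_{1}+m_{1}-1)+(\beta_{2}+m_{2}-1)=\beta+m-1$ for a two-vertex gluing along $k$ edges (using $\beta=\beta_{1}+\beta_{2}+k-1$, $m=m_{1}+m_{2}-k$) is verified correctly; this is the whole content of why the single generating series $\Delta=\sum\hbar^{\beta+m-1}\gamma_{m,n,\beta}$ can carry all the structure maps. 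One point you gesture at but should make explicit in the final step: the cobar differential on $\DD(\Frob_{0})$ is a sum over \emph{connected} two-vertex graphs only ($k\geq 1$), whereas $\Delta^{2}=\tfrac12[\Delta,\Delta]$ a priori also contains the $k=0$ (disconnected, ``normal-ordered'') terms of the composite of two polydifferential operators. These cancel precisely because $\Delta$ has odd degree: the disconnected part of $\Delta\circ\Delta$ is invariant under exchanging the two factors while the Koszul sign of that exchange is $-1$, so it vanishes. Without that observation the matching of $(\partial_{V}+\Delta)^{2}=0$ with the chain-map condition on generators does not close up. With that supplied, together with the routine verification that the order filtration on differential operators annihilating $1$ and vanishing at the origin inverts the ``extend as an $m$th-order operator'' construction, your outline gives a complete proof.
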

The properad $\Frob_0$ is known to be Koszul~\cite{CMW2014}, implying that the properad $\DD(\Frob_0)$ is a cofibrant replacement of the properad $\invLB_0$ of involutive Lie bialgebras, which is the quotient of $\LB_0$ by the ideal generated by $
  \,\tikz[baseline=(basepoint)]{
    \path (0,0pt) coordinate (basepoint) (0,1pt) node[dot] {} (0,7pt) node[dot] {};
    \draw (0,-4pt) -- (0,0pt); \draw (0,8pt) -- (0,12pt);
    \draw (0,1pt) .. controls +(4pt,1pt) and +(4pt,-1pt) .. (0,7pt);
    \draw (0,1pt) .. controls +(-4pt,1pt) and +(-4pt,-1pt) .. (0,7pt);
  }\,$.
We will not use the Koszulity of the properad $\Frob_{0}$, but we mention it for those readers interested in pursuing the connection with Batalin--Vilkovisky integrals discussed in Remark~\ref{remark.bvconnection}.

The \define{tensor product} of dioperads (resp.\ genus-graded properads) $P$ and $Q$ satisfies $(P\otimes Q)(\m,\n) = P(\m,\n) \otimes Q(\m,\n)$ (resp.\ $(P\otimes Q)(\m,\n,\beta) = P(\m,\n,\beta) \otimes Q(\m,\n,\beta)$).  Any homomorphism $R \to P\otimes Q$ gives a way to turn a tensor product of a $P$-algebra with a $Q$-algebra into an $R$-algebra.

\begin{proposition}
\label{prop.tensorformulae}
  For each $d,d'$, there are homomorphisms:
  \begin{gather*} 
    \label{tensorformulae1} \shLB_{d'-d} \to {\shLB_{d'}} \otimes {\hFrob_{d}}, \\
    \label{tensorformulae2} \hFrob_{d'+d} \to {\hFrob_{d'}}\otimes {\hFrob_{d}}.
  \end{gather*}
  These formulas hold regardless of whether they are interpreted in the dioperadic or properadic sense.
\end{proposition}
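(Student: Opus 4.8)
The plan is to build the two homomorphisms first at the level of the strict Koszul (di/pr)operads $\LB$ and $\Frob$, and then transfer them to the cofibrant resolutions by a lifting argument. At the strict level, define a homomorphism $\LB_{d'-d}\to\LB_{d'}\otimes\Frob_{d}$ by sending the generating bracket $\mult$ of $\LB_{d'-d}$ to the tensor product of the bracket of $\LB_{d'}$ with the multiplication generator of $\Frob_{d}$, and the generating cobracket $\comult$ of $\LB_{d'-d}$ to the tensor product of the cobracket of $\LB_{d'}$ with the comultiplication generator of $\Frob_{d}$; the homological degrees are compatible, since $(d'-1)+(-d)=(d'-d)-1$ and $(-1)+0=-1$. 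Similarly, define $\Frob_{d'+d}\to\Frob_{d'}\otimes\Frob_{d}$ by sending the multiplication, resp.\ comultiplication, generator of $\Frob_{d'+d}$ to the tensor square of the multiplication, resp.\ comultiplication, generator, again matching degrees since $(-d')+(-d)=-(d'+d)$ and $0+0=0$. Checking that these assignments kill the defining relations is the shifted and signed form of two classical facts: that the tensor product of a (shifted) Lie bialgebra with a commutative and cocommutative (shifted) Frobenius algebra is again a Lie bialgebra, and that the tensor product of two such Frobenius algebras is again one. In the first case the Jacobi and co-Jacobi relations pass through the (co)associativity and (co)commutativity of $\Frob_{d}$ while the cocycle relation passes through the Frobenius compatibility relations (the bottom row of the presentation of $\Frob_{d}$); in the second, all the relations pass through the corresponding relations of the two tensor factors. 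Because all generators and all relations in sight are supported in genus $0$, these constructions make literal sense, and the relation-checks are literally the same, whether we read everything as dioperads or as genus-graded properads.

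To pass to the homotopy versions, recall from Propositions~\ref{prop.ddish} and~\ref{prop.koszulity} that, in both the dioperadic and the genus-graded properadic model categories, $\shLB_{\bullet}$ is a cofibrant replacement of $\LB_{\bullet}$ and $\hFrob_{\bullet}=\DD(\DD(\Frob_{\bullet}))$ is a cofibrant replacement of $\Frob_{\bullet}$; in particular $\shLB_{d'-d}$ and $\hFrob_{d'+d}$ are cofibrant, and the resolution maps $\shLB_{d'}\to\LB_{d'}$, $\hFrob_{d}\to\Frob_{d}$, $\hFrob_{d'}\to\Frob_{d'}$ are acyclic fibrations, i.e.\ surjective quasiisomorphisms of underlying $\SS$-bimodules. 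Since the tensor product of (di/pr)operads is computed arity-by-arity, and genus-by-genus, as a tensor product of chain complexes over the characteristic-$0$ field $\KK$, and since such a tensor product of surjective quasiisomorphisms is again a surjective quasiisomorphism (K\"unneth), the maps $\shLB_{d'}\otimes\hFrob_{d}\to\LB_{d'}\otimes\Frob_{d}$ and $\hFrob_{d'}\otimes\hFrob_{d}\to\Frob_{d'}\otimes\Frob_{d}$ are themselves acyclic fibrations. Composing the strict homomorphisms above with the resolution maps $\shLB_{d'-d}\to\LB_{d'-d}$ and $\hFrob_{d'+d}\to\Frob_{d'+d}$ gives, for each formula, a commuting square with a cofibrant (di/pr)operad in the upper left and an acyclic fibration on the right, so a lift exists; the lifts are the desired homomorphisms $\shLB_{d'-d}\to\shLB_{d'}\otimes\hFrob_{d}$ and $\hFrob_{d'+d}\to\hFrob_{d'}\otimes\hFrob_{d}$, and the same argument runs verbatim in the dioperadic and the properadic settings (there is no need to compare the two, so the non-exactness of the universal enveloping functor is irrelevant). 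One could instead attempt to write the homomorphisms explicitly on the generating corollas of $\shLB_{d'-d}$ and $\hFrob_{d'+d}$, but that amounts to reconstructing such a lift by hand.

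The only step that requires genuine work is the verification that the strict assignments on generators respect the relations of $\LB_{d'-d}$ and $\Frob_{d'+d}$; this is a finite, purely combinatorial sign computation, and it is exactly here that the factors of $(-1)^{d}$ in the presentation of $\Frob_{d}$, together with the sign conventions governing the symmetry of the generating corollas, get used. Everything downstream --- cofibrancy of the sources, exactness of $-\otimes-$ on quasiisomorphisms and on surjections, and the lifting itself --- is formal model-category bookkeeping.
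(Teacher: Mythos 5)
Your proposal is correct and follows essentially the same route as the paper: reduce to the strict level by lifting against the acyclic fibrations $\shLB_{d'}\otimes\hFrob_{d}\to\LB_{d'}\otimes\Frob_{d}$ and $\hFrob_{d'}\otimes\hFrob_{d}\to\Frob_{d'}\otimes\Frob_{d}$ (the paper compresses your K\"unneth step into the remark that $\SS$-bimodules are semisimple in characteristic $0$), and then define the strict maps by $\mult\mapsto\mult\otimes\mult$ and $\comult\mapsto\comult\otimes\comult$ exactly as you do. Your degree bookkeeping matches the paper's conventions, so nothing further is needed.
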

\begin{proof}
  Since $\shLB_{d'} \to \LB_{d'}$ and $\hFrob_{d} \to \Frob_{d}$ are cofibrant replacements for all $d,d'$, and since the category of $\SS$-bimodules is semisimple in characteristic $0$, it suffices to witness maps $\LB_{d'-d} \to \LB_{d'}\otimes \Frob_{d}$ and $\Frob_{d'+d} \to \Frob_{d'} \otimes \Frob_{d}$.  On generators these are $\mult \mapsto \mult\otimes\mult$ and $\comult \mapsto \comult \otimes\comult$, and the reader may easily check the relations.
\end{proof}

Moreover, we have the following maps, which are part of a much more general story but suffice for our purposes:

\begin{proposition} \label{prop.tensorformulae5}
  Let $P$ be any positive locally finite-dimensional genus-graded properad.  There is a canonical homomorphism $\hBDF \to P \otimes \DD(P)$ taking the generator $\gamma_{m,n,\beta}$ to the canonical element in $P(m,n,\beta) \otimes P(m,n,\beta)^{*}[-1] \subseteq (P \otimes \DD(P))(m,n,\beta)$.
\end{proposition}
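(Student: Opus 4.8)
The plan is to exploit that $\DD(\Frob_{0})$ is, by the very definition of the bar dual, the genus-graded properad \emph{freely} generated by the $\SS$-bimodule $\bar\Frob_{0}^{*}[-1]$. By Proposition~\ref{shbdfthm} this generating $\SS$-bimodule is one-dimensional in each tridegree $(m,n,\beta)$ with $m,n\neq 0$ and $(m,n,\beta)\neq(1,1,0)$ --- spanned there by $\gamma_{m,n,\beta}$, the linear dual of the canonical generator of $\Frob_{0}(m,n,\beta)\cong\KK$, placed in homological degree $-1$ and genus $\beta$ --- and is zero otherwise. So to produce a genus-graded properad homomorphism $F\colon\DD(\Frob_{0})\to P\otimes\DD(P)$ I only need to choose, $\SS$-equivariantly and preserving homological degree and genus, the images $F(\gamma_{m,n,\beta})$; then $F$ extends uniquely over the underlying graded properads, and the one remaining requirement --- that $F$ intertwine the differentials --- reduces, since $F$ is a homomorphism and the $\gamma_{m,n,\beta}$ generate, to the identities $\partial F(\gamma_{m,n,\beta}) = F(\partial\gamma_{m,n,\beta})$ on generators.

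First I would set $F(\gamma_{m,n,\beta})$ equal to the canonical element $e_{m,n,\beta}\in P(m,n,\beta)\otimes P(m,n,\beta)^{*}[-1]\subseteq(P\otimes\DD(P))(m,n,\beta)$, namely the image of $\id\in\End(P(m,n,\beta)) = P(m,n,\beta)\otimes P(m,n,\beta)^{*}$ under the inclusion $P(m,n,\beta)^{*}[-1]\hookrightarrow\DD(P)(m,n,\beta)$ of generators; this makes sense because $P$ is locally finite-dimensional, and in a basis $\{p_{i}\}$ with dual basis $\{p_{i}^{*}\}$ it reads $e_{m,n,\beta}=\sum_{i}p_{i}\otimes p_{i}^{*}$, so it is basis-independent, $\SS$-equivariant, of genus $\beta$, and homogeneous of degree $(\deg p_{i})+(-\deg p_{i}-1)=-1$. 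Everything up to here is formal. Then I would verify the chain-map identity one generator at a time. On one side, the differential of $\DD(\Frob_{0})$ is by definition dual to $\sum_{\Gamma}\Circ_{\Gamma}$ over two-vertex dags $\Gamma$, and since every $\Frob_{0}(m',n',\beta')$ is one-dimensional with composition along any dag sending (canonical generator)$\,\otimes\,$(canonical generator) to the canonical generator, this dualizes to $\partial\gamma_{m,n,\beta}=\sum_{\Gamma}\pm\,\gamma_{v_{1}}\circ_{\Gamma}\gamma_{v_{2}}$, the sum over two-vertex dags with external profile $(m,n)$, total genus $\beta$, and neither vertex of shape $(1,1,0)$ (such splittings never arise because one passes to $\bar\Frob_{0}$). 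Applying $F$ and using that composition in $P\otimes\DD(P)$ is computed componentwise turns $F(\partial\gamma_{m,n,\beta})$ into $\sum_{\Gamma}\pm\,(e_{v_{1}}\circ_{\Gamma}e_{v_{2}})$. On the other side, the differential on $P\otimes\DD(P)$ is $\partial_{P}\otimes\id+\id\otimes\partial_{\DD(P)}$, and $\partial_{\DD(P)}$ is the sum of a piece built from the internal differential of $P$ and the ``bar'' piece dual to composition in $P$; applied to $e_{m,n,\beta}=\sum_{k}p_{k}\otimes p_{k}^{*}$, the $\partial_{P}\otimes\id$ term and the internal-differential piece of $\id\otimes\partial_{\DD(P)}$ cancel --- this is exactly the statement that the identity endomorphism is a cycle for the induced differential --- and the surviving bar term, after expanding the canonical elements in dual bases and contracting $\sum_{k}p_{k}\langle p_{k}^{*},-\rangle=\id$, reproduces precisely $\sum_{\Gamma}\pm\,(e_{v_{1}}\circ_{\Gamma}e_{v_{2}})$.

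The two sides therefore agree as soon as the signs do, and I expect this sign comparison to be the only real obstacle: one must reconcile the Koszul signs coming from the shift $[-1]$ in the definition of $\DD$, the signs in the dual differential on a linear dual, and the signs built into the composition in the tensor product of genus-graded properads. I would handle this by fixing once and for all the conventions of~\cite{MR1960128,MR2320654} and tracking a single representative two-vertex dag, since all the relevant signs are local to one such splitting. Finally I would remark that $F$ is \emph{canonical} because it is natural in $P$ --- this follows from the basis-independence of the $e_{m,n,\beta}$ together with the contravariant functoriality of $\DD$ --- and that, for readers who want the ``much more general story,'' the construction amounts to the observation that the collection of identity endomorphisms defines a Maurer--Cartan element of a convolution dg Lie algebra with underlying complex $\prod_{m,n,\beta}\End(P(m,n,\beta))[-1]$, the Maurer--Cartan elements of which are exactly the homomorphisms $\DD(\Frob_{0})\to P\otimes\DD(P)$.
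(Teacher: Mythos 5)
Your argument is correct and is precisely the paper's own (one-sentence) proof, expanded: the map is defined on generators by the canonical elements $\sum_i p_i\otimes p_i^*$, and compatibility with differentials reduces to the fact that the differential on $\DD(P)$ is dual to composition in $P$, with the internal-differential terms cancelling because the identity endomorphism is a cycle. The only difference is the level of detail; your identification of the sign bookkeeping as the sole remaining technical point is accurate.
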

\begin{proof}
  That this map respects the differentials follows from the fact that the differential on $\DD(P)$ encodes composition in $P$.
\end{proof}

For example, consider the case when $P = \shLB_{d}$ for $d$ even.  Then $\shLB_{d}(m,n,\beta)$ has a basis consisting of all dags $\Gamma$ with $m$ inputs, $n$ outputs, each vertex labeled by an ``internal genus,'' and with ``total genus'' (the sum of internal genera and the genus of the graph) $\beta$.  By Proposition~\ref{prop.ddish}, the properad $\DD(\shLB_{d}) = \DD(\DD(\invFrob_{d}))$ is a cofibrant replacement of $\invFrob_{d}$.  It is convenient to describe the basis for $(\shLB_{d})^{*}$ with the same labeled graphs, but with the pairing $\<\Gamma,\Gamma\> = \left|\Aut\Gamma\right|$, where $\Aut\Gamma$ is the group of automorphisms of $\Gamma$.  Then the generator $\gamma_{\m,\n,\beta}$ of $\hBDF$ gets mapped via Proposition~\ref{prop.tensorformulae5} to a sum over ``Feynman diagrams.''

\begin{remark}\label{remark.bvconnection}
Another relation with Feynman diagrams is the following.  The Batalin--Vilkovisky approach to perturbative oscillating integrals of the form $\int f \exp\bigl( \frac i \hbar s\bigr)$ constructs, for each critical point of the ``action'' $s$, a differential on a completed symmetric algebra over $\KK\[\hbar\]$ which is a second-order differential operator, but a derivation modulo $\hbar$ (see for example~\cite{GJF2012}).  A graded commutative algebra over $\KK\[\hbar\]$ with such a differential is called a \define{Beilinson--Drinfeld algebra} in \cite{costellogwilliam}; the name \define{Batalin--Vilkovisky algebra} means something slightly different.

 In particular, any peturbative oscillating integral gives an example of a $\hBDF$-algebra; more generally the BV-BRST formalism constructs a $\hBDF$-algebra (supported in both positive and negative homological degrees) for any gauged perturbative oscillating integral.  In such examples, the differential $\Delta$ is always a second-order operator, and its principal symbol always satisfies a nondegeneracy condition.  Arbitrary $\hBDF$-algebras could be thought of as ``homotopic'' versions of perturbative gauged oscillating integrals ``with parameters''; they can arise in physically meaningful situations when the second-order BV operator requires ``quantum corrections''~\cite{MR1428393}.  The homological perturbation lemma (see e.g.~\cite{Crainic04,MR2762538}; we will review it in the proof of Assertion~\ref{thm.ed}) when applied to $\hBDF$-algebras coming from oscillating integrals produces exactly the usual sum of Feynman diagrams, and for a general $\hBDF$-algebra produces a more complicated diagrammatic sum.

Further relations between properads and Feynman diagrams are discussed in~\cite{MR2600029}.
\end{remark}

\section{\texorpdfstring{An $\hdiFrob_{d}$-algebra}{A homotopy Frobenius algebra} structure on \texorpdfstring{$\Chains_{\bullet}(M)$}{Chains(M)}} \label{section.frobchains}

Suppose $M$ is an oriented $d$-dimensional manifold, which need not be compact.  Both the homology $\H_{\bullet}(M)$ and shifted cohomology $\H^{d-\bullet}(M)$ carry $\Frob_{d}$-algebra structures.  In this section we describe a canonical contractible space of lifts of this structure to the chain level.  For convenience, we assume that $M$ is smooth, work over $\KK = \RR$, and take the complex $\Omega_{\cpt}^{-\bullet}(M)[d]$ of compactly-supported smooth de Rham forms as our model of $\Chains_{\bullet}(M)$, with the projective tensor product $\Chains_{\bullet}(M)^{\otimes n} = \Omega_{\cpt}^{-\bullet}(M^{\times n})[nd]$.  We denote the de Rham differential by $\partial$.  
Working with distributional forms makes no difference, and we moreover
 expect that some version of our arguments would work for piecewise-linear manifolds and cellular chains.

Equip $M$ with a complete Riemannian metric $g$; then $M^{m+n}$ carries an induced metric, and we denote by $g(x,y)$ the distance between points $x,y\in M^{m+n}$.  Given an \define{ultraviolet length scale} $\ell \in \RR_{>0}$, let $B_{\ell,g}(M,m+n) \subseteq M^{m+n}$ denote the open neighborhood
$$ B_{\ell,g}(M,m+n) = \{x\in M^{m+n} \st \exists y\in \diag(M) \text{ with } g(x,y) < \ell\}$$
where $\diag: M \to M^{m+n}$ is the diagonal embedding.  Suppose that $m,n\geq 1$.
Then each smooth de Rham form $f \in \Omega^{-\bullet}(M^{m+n})[dn]$ with support inside $B_{\ell,g}(M,m+n)$ determines, by a pull-push construction, a map
$ \tilde f: \Omega^{d-\bullet}(M)^{\otimes m} \to \Omega^{d-\bullet}(M)^{\otimes N}$ 
which restricts to a map
$ \Chains_{\bullet}(M)^{\otimes m} \to \Chains_{\bullet}(M)^{\otimes N} $
that we also call $\tilde f$.  Such a map is \define{$\ell$-quasilocal with respect to~$g$}.  The collection $\qloc_{\ell,g}(m,n)$ of all $\ell$-quasilocal maps $\tilde f$ is an $\SS$-invariant subcomplex of both $\End(\Chains_{\bullet}(M))(m,n)$ and  $\End(\Omega^{d-\bullet}(M))(m,n)$.  In particular, for $m,n\geq 1$, if $\tilde f\in \qloc_{\ell,g}(m,n)$ is $\partial$-closed, then it descends to both homology and cohomology, defining maps $\H_{\bullet}(f) : \H_{\bullet}(M)^{m}\to \H_{\bullet}(M)^{\otimes n}$ and $\H^{d-\bullet}(f) : \H^{d-\bullet}(M)^{\otimes m} \to \H^{d-\bullet}(M)^{\otimes n}$.

A metric $g'$ on $M$ is \define{finer} than $g$ if $g'(x,y) > g(x,y)$ for all $x,y \in M$.  The usual tubular neighborhood theorems imply that for fixed $\ell,m,n$, for all sufficiently fine $g$ the neighborhood $B_{\ell,g}(M) \subseteq M^{m+n}$ contracts onto $\diag(M)$.  It follows that:
\begin{proposition}
\label{prop.hqloc}
  Fix $\ell,m,n$. For sufficiently fine $g$, $\H_{\bullet}\bigl(\qloc_{\ell,g}(m,n)\bigr) \cong \H^{d-\bullet}(M)[-dm]$. \qedhere
\end{proposition}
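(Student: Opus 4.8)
The plan is to identify $\qloc_{\ell,g}(m,n)$, as a chain complex, with the space $C$ of smooth de Rham forms $f\in\Omega^{-\bullet}(M^{m+n})[dn]$ whose support lies in $B_{\ell,g}(M,m+n)$, and then to compute $\H_{\bullet}(C)$ by a Thom isomorphism along the normal directions of $\diag(M)\subseteq M^{m+n}$. (Throughout $m,n\geq 1$, as in the construction of $\qloc$.)

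First I would check that the pull--push assignment $f\mapsto\tilde f$ is an isomorphism of chain complexes from $C$ onto $\qloc_{\ell,g}(m,n)$. Surjectivity is the definition of $\qloc_{\ell,g}(m,n)$; injectivity follows by evaluating $\tilde f$ on $\delta$-function currents at points of the incoming factor $M^{m}$, which recovers $f$ itself (legitimate since, as remarked, working with distributional forms changes nothing). That $f\mapsto\tilde f$ carries the de Rham differential on $f$ to the commutator differential on $\End$ is Stokes' theorem for integration along the fibres of the outgoing projection $M^{m+n}\to M^{n}$, with no boundary terms because the support condition makes the wedge of $f$ with any pulled-back form proper over $M^{n}$. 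Finally the degree shifts cancel: the twist $[dn]$ on $f$, the twist $[d]$ in $\Chains_{\bullet}(M)=\Omega^{-\bullet}_{\cpt}(M)[d]$, and the drop of $dm$ in form degree from integrating over the $m$ incoming fibres combine so that $f\mapsto\tilde f$ is degree preserving.

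Second I would make precise the vector-bundle structure of $B_{\ell,g}$. The normal bundle of $\diag\colon M\hookrightarrow M^{m+n}$ is canonically $\nu\cong(TM)^{\oplus(m+n-1)}$, of rank $r:=d(m+n-1)$, and it is oriented because $M$ is. For $\ell,m,n$ fixed and $g$ sufficiently fine, $\ell$ lies below the normal injectivity radius of $\diag(M)$, so the normal exponential map identifies $B_{\ell,g}(M,m+n)$ with the open disk subbundle $\{v\in\nu: |v|_{g}<\ell\}$. Under this identification $C$ becomes the complex of forms on $\nu$ with support closed in the total space and contained in that open subbundle --- that is, with vertically bounded, hence fibrewise compact, support --- and integration over the normal fibre is a well-defined chain map $\qloc_{\ell,g}(m,n)\cong C\to\Omega^{d-\bullet}(M)[-dm]$, where the target degree comes from $dn-r=dn-d(m+n-1)=d-dm$. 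By the Thom isomorphism for the oriented bundle $\nu$ --- the fibrewise Poincar\'e lemma, whose contracting homotopies are vertical and therefore preserve the support condition --- this map is a quasi-isomorphism; and $\H_{\bullet}\bigl(\Omega^{d-\bullet}(M)[-dm]\bigr)=\H^{d-\bullet}(M)[-dm]$, which is the claim.

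The one genuinely non-routine point is the upgrade from ``$B_{\ell,g}$ contracts onto $\diag(M)$'' to ``$B_{\ell,g}$ is an open disk bundle of $\nu$'': the complex $\qloc_{\ell,g}(m,n)$ remembers the support condition near the boundary of $B_{\ell,g}$, not merely the homotopy type of $B_{\ell,g}$, so one really needs the fibrewise star-shaped (vector-bundle) structure, and this is exactly what forces $g$ to be taken fine for the fixed $\ell,m,n$. Everything else --- injectivity of $f\mapsto\tilde f$, the Stokes computation, and the Thom isomorphism with its degree shift --- is standard once that structure is available.
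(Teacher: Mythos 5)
Your argument is correct and is essentially the paper's: the paper offers only the one-line justification that, for fixed $\ell,m,n$ and sufficiently fine $g$, tubular-neighborhood theorems make $B_{\ell,g}(M,m+n)$ contract onto $\diag(M)$, leaving the identification of $\qloc_{\ell,g}(m,n)$ with kernels supported in $B_{\ell,g}$ and the Thom isomorphism implicit, and you have filled in exactly those steps with the correct degree bookkeeping. Your closing observation --- that the complex of supported forms remembers more than the homotopy type of $B_{\ell,g}$, so one really needs the fibrewise star-shaped disk-bundle structure furnished by the tubular neighborhood theorem rather than mere contractibility --- is a worthwhile sharpening of the paper's one-line justification.
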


Composition of elements in $\End(\Chains_{\bullet}(M))$ corresponds to convolution of their integral kernels.  This does not preserve  length scales $\ell$, but the triangle inequality guarantees that the length scales change in a controlled way:
\begin{proposition}
  Given  $f_{1}\in \qloc_{\ell,g}(m_{1},n_{1})$ and $f_{2} \in \qloc_{\ell,g}(m_{2},n_{2})$, any properadic composition of $f_{1}$ with $f_{2}$ is $\ell'$-quasilocal, where $\ell'$  depends on $\ell,m_{1},m_{2},n_{1},n_{2}$ but not on $f,f'$.\qedhere
\end{proposition}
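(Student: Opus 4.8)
The plan is to read the claim off from the explicit convolution formula for the composed integral kernel. A properadic composition of $f_{1}$ with $f_{2}$ is indexed by a dag $\Gamma$ with exactly two vertices, say $v_{1}$ and $v_{2}$; connectedness and acyclicity force all internal edges of $\Gamma$ to point the same way, so after relabeling there are $k\geq 1$ internal edges, all oriented $v_{1}\to v_{2}$, with $v_{1}$ carrying $f_{1}$ and $v_{2}$ carrying $f_{2}$ (the opposite order is handled identically). Then all $m_{1}$ inputs of $f_{1}$ and $n_{1}-k$ of its outputs are external edges of $\Gamma$, as are $m_{2}-k$ of the inputs and all $n_{2}$ outputs of $f_{2}$, so $m_{\Gamma}=m_{1}+m_{2}-k$ and $n_{\Gamma}=n_{1}+n_{2}-k$. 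As recalled just above, composition corresponds to convolution: the kernel of $\Circ_{\Gamma}(f_{1},f_{2})$ on $M^{m_{\Gamma}+n_{\Gamma}}$ is obtained by treating the $k$ internal edges as carrying variables $z\in M^{k}$ shared between the kernels of $f_{1}$ and $f_{2}$, multiplying those kernels, and integrating $z$ out.

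The support estimate is then a one-line application of the triangle inequality. Suppose the external configuration $x\in M^{m_{\Gamma}+n_{\Gamma}}$ lies in the support of $\Circ_{\Gamma}(f_{1},f_{2})$. Then there is $z\in M^{k}$ for which the induced configuration of the legs meeting $v_{1}$ lies in $\operatorname{supp}f_{1}\subseteq B_{\ell,g}(M,m_{1}+n_{1})$ and the configuration of the legs meeting $v_{2}$ lies in $\operatorname{supp}f_{2}\subseteq B_{\ell,g}(M,m_{2}+n_{2})$. By definition of $B_{\ell,g}$, every leg at $v_{1}$ --- the relevant coordinates of $x$ together with all of $z$ --- lies within $g$-distance $\ell$ of a single point $p_{1}\in M$, and every leg at $v_{2}$ lies within $\ell$ of a single point $p_{2}\in M$. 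Since $k\geq 1$, some coordinate $z_{j}$ is shared by the two clusters, so $g(p_{1},p_{2})\leq g(p_{1},z_{j})+g(z_{j},p_{2})\leq 2\ell$, and hence every external coordinate of $x$ lies within $3\ell$ of $p_{1}$. In the product Riemannian metric on $M^{m_{\Gamma}+n_{\Gamma}}$ this gives $g\bigl(x,\diag(p_{1})\bigr)\leq 3\ell\sqrt{m_{\Gamma}+n_{\Gamma}}\leq 3\ell\sqrt{m_{1}+n_{1}+m_{2}+n_{2}}$, so $\Circ_{\Gamma}(f_{1},f_{2})$ is $\ell'$-quasilocal with $\ell'=3\ell\sqrt{m_{1}+n_{1}+m_{2}+n_{2}}$; this bound depends only on $\ell$ and the arities, not on $f_{1},f_{2}$.

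The one point needing a little care --- and the closest thing to an obstacle --- is checking that the $z$-integration genuinely defines a (distributional) de Rham form with the stated support, rather than a merely formal kernel. But the estimate above shows that for fixed external variables the integrand is supported in a fixed relatively compact ball of $M^{k}$, so the pull-push is along a map that is proper on supports and hence converges; the same observation shows the composite preserves compact supports, so it lies in $\End(\Chains_{\bullet}(M))(m_{\Gamma},n_{\Gamma})$ and is of the form $\tilde f$ for a quasilocal $f$. Finally, running the same chaining argument along a spanning tree shows more generally that any iterated properadic composition of finitely many quasilocal maps is again quasilocal, with length scale controlled by the arities of the pieces --- the form in which this fact will be used later.
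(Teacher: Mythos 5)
Your argument is correct and is exactly the route the paper intends: the paper states this proposition with no written proof beyond the preceding remark that composition is convolution of integral kernels and that the triangle inequality controls the length scales, which is precisely what you carry out (including the explicit bound $\ell' = 3\ell\sqrt{m_1+n_1+m_2+n_2}$ and the observation that completeness of $g$ makes the $z$-integration proper on supports). Nothing further is needed.
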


\begin{corollary} \label{cor.filtered}
  There exist explicit real numbers $\ell(1,m,n) < \ell(2,m,n) < \dots \in \RR_{>0}$ such that $\qloc_{g}(m,n) = \bigcup_{\ell}\qloc_{\ell,g}(m,n)$ has the structure of a {filtered properad}, whose $k$th filtered piece is $\qloc_{g}(m,n)_{\leq k} = \qloc_{\ell(k,m,n),g}(m,n)$.  (By convention, $m,n > 0$, and we freely adjoint a unit in $\qloc(1,1)$ at filtration level $0$, and homotopies relating it to a Thom form.) \qedhere
\end{corollary}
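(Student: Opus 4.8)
The preceding proposition ensures that properadic composites of quasilocal maps are quasilocal, so $\qloc_{g}$ already carries a properad structure (once a unit is adjoined, as below); the content of the corollary is the explicit compatible filtration.  The plan is to write this filtration down by a recursion on the length scale and then to check that composition is filtered in the additive sense.  The only delicate point is a uniform control on how much the ultraviolet length scale can grow under composition: the preceding proposition bounds the new length scale in terms of the arities of the two factors, and this is \emph{not} enough for us, because a connected dag of fixed external arity can have internal vertices of arbitrarily large arity and arbitrarily large genus, so an arity-dependent bound cannot be absorbed into a single filtration level.  However, the triangle-inequality estimate underlying that proposition in fact yields more.  If $f_{1},\dots,f_{k}$ are $\ell$-quasilocal maps with respect to $g$ placed at the vertices $v_{1},\dots,v_{k}$ of a connected dag $\Gamma$, then the integral kernel of $\Circ_{\Gamma}(f_{1},\dots,f_{k})$ is supported on those configurations admitting witnesses $p_{1},\dots,p_{k}\in M$ with every half-edge at $v_{i}$ within $g$-distance $\ell$ of $p_{i}$; each internal edge of $\Gamma$ forces $g(p_{i},p_{j})\le 2\ell$ for the two vertices it joins, so connectedness gives $g(p_{i},p_{j})\le 2(k-1)\ell$ for all $i,j$, and hence every external half-edge of $\Gamma$ lies within $(2k-1)\ell$ of $p_{1}$.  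Thus $\Circ_{\Gamma}$ carries $\ell$-quasilocal inputs to a $(2k-1)\ell$-quasilocal output, with a constant depending only on the number $k$ of vertices of $\Gamma$ --- not on the vertex arities, and not on the genus.

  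Granting this, I would set $\ell(1,m,n):=1$ and $\ell(k,m,n):=(2k-1)\,\ell(k-1,m,n)$ for $k\ge 2$, equivalently $\ell(k,m,n)=\prod_{j=2}^{k}(2j-1)$, which we may take independent of $(m,n)$ and abbreviate $\ell(k)$; these are explicit, strictly increasing in $k$, and unbounded.  Hence $\qloc_{g}(m,n)_{\le k}:=\qloc_{\ell(k),g}(m,n)$ is an exhaustive increasing filtration of $\qloc_{g}(m,n)=\bigcup_{\ell}\qloc_{\ell,g}(m,n)$ by $\SS$-invariant subcomplexes.  Into $\qloc_{g}(1,1)$ one adjoins a strict composition unit at filtration level $0$, together with a contracting homotopy to a Thom form representing the identity on (co)homology; since $0+k=k$, placing these at level $0$ is compatible with an additive filtration.

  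It remains to verify compatibility with composition.  Let $\Gamma$ be a connected dag with $k$ vertices, the $i$th of arity $(m_{i},n_{i})$, external arity $(m_{\Gamma},n_{\Gamma})$, and take inputs $f_{i}\in\qloc_{g}(m_{i},n_{i})_{\le k_{i}}$ with each $k_{i}\ge 1$; set $K=\sum_{i}k_{i}$.  Then $k\le K$, and if $k\ge 2$ then $\max_{i}k_{i}\le K-(k-1)\le K-1$.  By the sharpened estimate, $\Circ_{\Gamma}(f_{1},\dots,f_{k})$ is $(2k-1)\ell(\max_{i}k_{i})$-quasilocal, and since $(2k-1)\ell(\max_{i}k_{i})\le (2K-1)\ell(K-1)=\ell(K)$ by monotonicity of $\ell(\cdot)$ and the recursion, it lies in $\qloc_{\ell(K),g}(m_{\Gamma},n_{\Gamma})=\qloc_{g}(m_{\Gamma},n_{\Gamma})_{\le K}$.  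The case $k=1$ is the trivial inequality $\ell(k_{1})\le\ell(k_{1})$, and compositions involving the adjoined unit are handled by the unit axioms.  Hence $\Circ_{\Gamma}$ respects the filtration additively, i.e.\ $\qloc_{g}$ is a filtered properad with the asserted filtered pieces.  The one real obstacle is the first step --- bounding the length-scale growth of a general properadic composition uniformly in the vertex arities and in the genus; once that sharper form of the preceding proposition is recorded, what remains is routine bookkeeping with the recursion.
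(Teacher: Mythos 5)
Your argument is correct and is essentially the one the paper leaves to the reader: iterate the triangle-inequality estimate behind the preceding proposition and choose the thresholds $\ell(k,m,n)$ recursively. Your sharpening of that estimate --- that composition along a connected dag with $k$ vertices sends $\ell$-quasilocal inputs to $(2k-1)\ell$-quasilocal outputs, uniformly in the vertex arities and in the genus --- is exactly the point that needs to be made explicit, since the proposition as literally stated (with $\ell'$ depending on the arities $m_{1},n_{1},m_{2},n_{2}$) would not by itself yield a filtration indexed only by $k$ and the external arity; with that in hand, the recursion $\ell(k)=(2k-1)\,\ell(k-1)$ together with the inequalities $k\le K$ and $\max_{i}k_{i}\le K-1$ closes the argument correctly.
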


If $X \mono Y$ is a closed embedding of oriented manifolds, a \define{Thom form} in a tubular neighborhood $U \supseteq X$ is a de Rham form on $Y$ supported in $U$ representing the class of $1$ in $\Omega^{\dim X - \bullet}(X) \simeq \{ f\in \Omega^{\dim Y - \bullet}(Y) \st \operatorname{support}(f) \subseteq U\}$.

An abstract induction procedure implies that both the dioperad $\hdiFrob_{d}$ and the properad $\hprFrob_{d}$ have filtrations whose $k$th filtered piece $(\hFrob_{d})_{\leq k}$ is free on finitely many generators, and for each generator $\Gamma$ in the $k$th filtered piece, its derivative $\partial\Gamma$ is in the $(k-1)$th filtered piece.

\begin{definition}
  An action of a filtered (di/pr)operad $P$ on $\Chains_{\bullet}(M)$ is \define{quasilocal with respect to $g$} (abbreviated \define{$g$-quasilocal}) if it factors through a homomorphism $\eta : P \to \qloc_{g}$ of filtered properads.
\end{definition}

  Refinements of $g$ do not change the properad $\qloc_{g}$, although the filtration changes to an equivalent one.
   We will thus sometimes drop the subscript $g$ from the notation, and say that an action is \define{quasilocal} if it is quasilocal with respect to some $g$.  When we want to make the dependence on the manifold $M$ explicit, we will call the properad $\qloc(M)$ (or $\qloc_{g}(M)$ when using a specific metric $g$).

We are now ready to prove the main result of this section:

\begin{theorem}
\label{thm.qlocaction}
  Fix a smooth oriented $d$-dimensional manifold $M$ and a positive integer $k$.  Consider the space of all $g$-quasilocal dioperadic actions $\eta$ of the $k$th filtered piece of $(\hdiFrob_{d})_{\leq k}$ on $\Chains_{\bullet}(M)$ such that all $\partial$-closed lifts of basis elements $\Gamma$ of the dioperad $\Frob_{d}$ get mapped under $\eta: (\hdiFrob_{d})_{\leq k} \to \qloc$ to Thom forms around $\diag(M) \mono M^{m_{\Gamma}+n_{\Gamma}}$.    For all sufficiently fine Riemannian metrics $g$ on $M$, this space is contractible.
\end{theorem}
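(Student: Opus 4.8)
The plan is to run the standard obstruction-theory argument that produces a canonical contractible space of cofibrant-replacement actions, but carried out ``one filtration level at a time'' so that quasilocality is preserved. Recall from the discussion before the theorem that $(\hdiFrob_d)_{\le k}$ is free on finitely many generators, graded by filtration level $1,\dots,k$, with the crucial property that a generator $\Gamma$ at level $j$ has $\partial\Gamma$ lying in the level-$(j-1)$ subdioperad. Dually, $\qloc_g$ is a filtered properad (Corollary~\ref{cor.filtered}) and hence, via the forgetful functor, a filtered dioperad; a $g$-quasilocal dioperadic action of $(\hdiFrob_d)_{\le k}$ is by definition a filtered homomorphism $\eta\colon (\hdiFrob_d)_{\le k}\to\qloc_g$, which amounts to choosing, compatibly with filtration, an image $\eta(\Gamma)\in\qloc_g(m_\Gamma,n_\Gamma)$ for each generator $\Gamma$, subject only to $\partial\,\eta(\Gamma)=\eta(\partial\Gamma)$ (there are no further relations since the dioperad is free). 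The extra constraint in the statement is that when $\Gamma$ is one of the $\partial$-closed level-$1$ generators lifting a basis element of $\Frob_d$ (these are exactly the ``corolla'' generators, of which there is one for each biarity $(m,n)$ with $m,n\ge 1$), $\eta(\Gamma)$ must be a Thom form around $\diag(M)\mono M^{m+n}$.

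First I would set up the induction on the filtration level $j=1,\dots,k$. At level $j=1$, the space of allowed data is the space of tuples of Thom forms, one in $\qloc_g(m,n)$ for each biarity, supported in the $\ell(1,m,n)$-neighborhood of the diagonal; by the definition of Thom form and Proposition~\ref{prop.hqloc} (for $g$ fine enough that $B_{\ell,g}(M,m+n)$ deformation-retracts onto $\diag(M)$) this is a nonempty affine subspace of a product of topological vector spaces, hence contractible. (One also records the auxiliary homotopies making the adjoined unit in $\qloc(1,1)$ behave like a Thom form, exactly as in Corollary~\ref{cor.filtered}; these too form a contractible space.) For the inductive step, suppose a $g$-quasilocal action $\eta_{<j}$ of $(\hdiFrob_d)_{\le j-1}$ is given; extending it over level $j$ means, for each level-$j$ generator $\Gamma$, choosing $\eta(\Gamma)\in\qloc_g(m_\Gamma,n_\Gamma)$ with $\partial\,\eta(\Gamma)=\eta_{<j}(\partial\Gamma)$. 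The right-hand side is a well-defined element of $\qloc_g(m_\Gamma,n_\Gamma)$ of the appropriate (lower) filtration level, and it is automatically $\partial$-closed because $\partial^2\Gamma=0$ forces $\partial(\eta_{<j}(\partial\Gamma))=\eta_{<j}(\partial^2\Gamma)=0$. So the obstruction to extension is the class $[\eta_{<j}(\partial\Gamma)]\in\H_\bullet(\qloc_g(m_\Gamma,n_\Gamma))$, and the set of extensions (when nonempty) is a torsor over the space of $\partial$-closed elements in the relevant degree, which is contractible as a topological vector space. Thus each stage is, fiberwise over the previous stage, either empty or a contractible affine space, and the whole space is a tower of such fibrations; contractibility of the total space follows provided every obstruction class vanishes and provided the Thom-form normalization at level $1$ is compatible with the degrees appearing.

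The main obstacle is precisely the vanishing of these obstruction classes: one must show $[\eta_{<j}(\partial\Gamma)]=0$ in $\H_\bullet(\qloc_g(m_\Gamma,n_\Gamma))$ for every level-$j$ generator $\Gamma$. Here is where the structure of $\hdiFrob_d$ as a \emph{cofibrant replacement} of the dioperad $\Frob_d$ does the work: by Proposition~\ref{prop.hqloc} (for $g$ sufficiently fine, depending on $j$, $m_\Gamma$, $n_\Gamma$), $\H_\bullet(\qloc_g(m,n))\cong\H^{d-\bullet}(M)[-dm]$, which is exactly $\Frob_d(m,n)$ in the sense that it is the value of the endomorphism dioperad of $\H^{d-\bullet}(M)$; and the homomorphism $\hdiFrob_d\to\Frob_d\to\End(\H^{d-\bullet}(M))$ is a genuine dioperad map, under which $\partial\Gamma\mapsto 0$ since $\Gamma$ is a generator of positive homological degree killed in $\Frob_d$. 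Chasing this through, the cycle $\eta_{<j}(\partial\Gamma)$ represents the image of $\partial\Gamma$ under $\hdiFrob_d\to\End(\H^{d-\bullet}(M))$ — this requires checking that $\eta_{<j}$ induces on homology the standard $\Frob_d$-action, which one proves by the same downward induction using that the level-$1$ generators map to Thom forms, i.e.\ to the operations computing cup product / intersection product on (co)homology — and this image is $0$. Hence every obstruction vanishes. The remaining bookkeeping — that ``sufficiently fine'' can be chosen uniformly for the finitely many generators in $(\hdiFrob_d)_{\le k}$ and finitely many biarities involved, that refining $g$ only changes the filtration to an equivalent one and so does not shrink the space of actions, and that the tower of contractible fibrations has contractible inverse limit (a finite tower, so no $\lim^1$ issues) — is routine. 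I would also remark that the same argument applies verbatim to the properadic statement $\hprFrob_d$, which is needed later, the only change being that ``dag'' replaces ``tree'' throughout and one invokes the properadic version of Proposition~\ref{prop.hqloc} together with the genus-grading.
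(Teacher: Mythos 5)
Your overall skeleton --- induct over the free generators of $(\hdiFrob_d)_{\le k}$ by filtration level, pin down the syzygy-degree-zero generators by the Thom-form condition, and treat each higher generator as an extension problem whose obstruction is a class in $\H_\bullet(\qloc_g(m_\Gamma,n_\Gamma))$ with a contractible space of solutions once it vanishes --- is exactly the paper's. But the step where you kill the obstruction classes has a genuine gap. You identify $\H_\bullet(\qloc_g(m,n))\cong\H^{d-\bullet}(M)[-dm]$ with ``the value of the endomorphism dioperad of $\H^{d-\bullet}(M)$'' and then argue that the obstruction vanishes because its image under $\hdiFrob_d\to\Frob_d\to\End(\H^{d-\bullet}(M))$ is zero. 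These two objects are not the same ($\H^{d-\bullet}(M)[-dm]$ has total dimension $\dim\H^\bullet(M)$, not $(\dim\H^\bullet(M))^{m+n}$), and the natural map $\H_\bullet(\qloc_g(m,n))\to\hom\bigl(\H_\bullet(M)^{\otimes m},\H_\bullet(M)^{\otimes n}\bigr)$ is far from injective: for $M=\RR^d$ and $m\ge 2$ the source is $\KK$ concentrated in degree $d(1-m)<0$ while the target is concentrated in degree $0$, so the map is zero. Hence ``the image of the obstruction in $\End(\H_\bullet(M))$ vanishes'' tells you nothing about the obstruction itself. What the paper actually uses is: (a) for a generator of syzygy degree $e_\Gamma=1$, $\eta(\partial\Gamma)$ is a difference of two iterated convolutions of Thom forms, each of which is again a Thom form for $\diag(M)\mono M^{m_\Gamma+n_\Gamma}$, so the two cycles represent the same class (the image of $1\in\H^0(M)$) and their difference is exact --- this is where the Thom-form normalization genuinely enters; (b) for syzygy degree $e_\Gamma\ge 2$, a pure degree count: $\partial\Gamma$ sits in homological degree $e_\Gamma-1+d(1-m_\Gamma)>d(1-m_\Gamma)$, strictly above the top nonvanishing degree $d(1-m_\Gamma)$ of $\H_\bullet(\qloc_g(m_\Gamma,n_\Gamma))$, so the obstruction vanishes for free. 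Neither of these appears in your write-up, and your proposed substitute does not work.

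Your closing remark that the argument applies ``verbatim'' to the properad $\hprFrob_d$ is also false, and the paper says so explicitly right after Corollary~\ref{cor.qlocaction}: the degree formula $\deg(\Gamma)=e_\Gamma+d(1-m_\Gamma)$ is special to trees, and for dags with genus the obstructions can land exactly in the degrees where $\H_\bullet(\qloc)$ is nonzero. Theorem~\ref{thm.no1} exhibits a genus-two graph whose obstruction in $\qloc(1,1)$ for $M=\RR$ is multiplication by $-\frac{1}{12}$ and is not exact. The tree-versus-dag distinction is not bookkeeping; it is the entire point of the dioperad/properad dichotomy in the paper.
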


The condition that the lifts of basis vectors act via Thom forms is exactly the condition necessary to assure that the induced $\Frob_{d}$-actions on (co)homology are the standard ones.

\begin{proof}
  We choose the metric $g$ to be sufficiently fine that for all generators $\Gamma$ of $(\hdiFrob_{d})_{\leq k}$, Proposition~\ref{prop.hqloc} holds for $(m,n,\ell) = \bigl(m_{\Gamma},n_{\Gamma},\ell(m_{\Gamma},n_{\Gamma},k)\bigr)$, where $\ell(m_{\Gamma},n_{\Gamma},k)$ is from Corollary~\ref{cor.filtered}.
  
  By construction, the generators of $\h^{\di}\Frob_{d} = \DD(\shLB_{d})$ are enumerated by directed trees in which every vertex $v$ has $m_{v}\geq 1$ inputs and $n_{v}\geq 1$ outputs, and the vertex with $(m_{v},n_{v}) = (1,1)$ is disallowed.  Let $e_{\Gamma}$ denote the number of internal edges in a tree $\Gamma$; it is called the \define{syzygy degree} of $\Gamma$.  Then a generator $\Gamma$ of $\h^{\di}\Frob_{d}$ with $m_{\Gamma}$ inputs, $n_{\Gamma}$ outputs, and syzygy degree $e_{\Gamma}$ is in homological degree $\deg(\Gamma) = e_{\Gamma} + d(1-m_{\Gamma})$.
  
  The lifts of basis elements of $\Frob_{d}$ are precisely the generators $\Gamma$ with syzygy degree $e_{\Gamma} = 0$.  The conditions of the theorem assert that these are mapped via $\eta : (\hdiFrob_{d})_{\leq k} \to \qloc$ to Thom forms.  The space of choices of Thom forms is contractible.
  
  The action of the remaining generators is determined by obstruction theory.  Let $\Gamma$ be a generator with syzygy degree $e_{\Gamma} = 1$.  Then $\eta(\partial\Gamma)$ is a difference of two Thom forms, and hence vanishes in homology.  Thus $\Gamma$ can be represented.  By Proposition~\ref{prop.hqloc}, the space of choices for $\Gamma$ is contractible.
  
  If $\Gamma$ is a generator with syzygy degree $e_{\Gamma} \geq 2$, then $\eta(\partial\Gamma)$ is closed by induction and in homological degree $e_{\Gamma} + d(1-m_{\Gamma}) > d(1-m_{\Gamma})$.  But Proposition~\ref{prop.hqloc} assures us that $\qloc(m_{\Gamma},n_{\Gamma})$ has no homology above degree $d(1-m_{\Gamma})$.  Thus $\eta(\partial\Gamma)$ is exact, and moreover the space of choices for $\Gamma$ is contractible.
\end{proof}

\begin{corollary} \label{cor.qlocaction}
  There is a canonical contractible space of actions of $\hdiFrob_{d}$ on $\Chains_{\bullet}(M)$.
\end{corollary}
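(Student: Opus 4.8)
The plan is to package the per-level statement of Theorem~\ref{thm.qlocaction} into a statement about the whole dioperad; the only real subtlety is that the metric making the level-$k$ space contractible is allowed to depend on $k$. First I would observe that, since $\hdiFrob_{d} = \colim_{k}(\hdiFrob_{d})_{\leq k}$ and $\qloc$ is a filtered (di/pr)operad, a normalized quasilocal action of $\hdiFrob_{d}$ on $\Chains_{\bullet}(M)$ is the same as a compatible tower, over $k$, of normalized quasilocal actions of the finite pieces $(\hdiFrob_{d})_{\leq k}$, with the background metric permitted to be refined as $k$ grows. Accordingly I would take the candidate canonical space to be $\hlim_{k}\bigl(\hcolim_{g} X_{k}^{g}\bigr)$, where $X_{k}^{g}$ denotes the space appearing in Theorem~\ref{thm.qlocaction} and, at each level $k$, the homotopy colimit runs over the poset of complete Riemannian metrics on $M$ that are sufficiently fine for level $k$, ordered by refinement. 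No auxiliary choices enter this description, so canonicity is automatic, and the whole content is contractibility.

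For the inner homotopy colimits I would argue that, for each fixed $k$, the poset of metrics sufficiently fine for level $k$ is directed: given two such metrics, a complete metric whose length functional dominates both --- for instance one built from their pointwise sum via a compactness argument --- is a common refinement, and every refinement of a sufficiently fine metric is again sufficiently fine. A directed poset has contractible nerve, and every $X_{k}^{g}$ occurring in the colimit is contractible by Theorem~\ref{thm.qlocaction}, so each $\hcolim_{g} X_{k}^{g}$ is contractible.

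For the outer homotopy limit I would show that the restriction maps assemble into a tower of weak equivalences. Passing from level $k$ to level $k+1$ freely adjoins finitely many generators of syzygy degree $\geq 1$ whose differentials already lie at level $k$, so, exactly as in the proof of Theorem~\ref{thm.qlocaction}, extending an action is a matter of choosing, for each new generator, a $\partial$-primitive in $\qloc(m,n)$ of an already $\partial$-closed element; by Proposition~\ref{prop.hqloc} these primitives form a contractible affine space depending continuously on the level-$k$ data, so $X_{k+1}^{g} \to X_{k}^{g}$ is a fibration with contractible fibre. Moreover the inclusion of the metrics sufficiently fine for level $k+1$ into those sufficiently fine for level $k$ is cofinal, so it induces a weak equivalence on homotopy colimits. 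Hence $\{\hcolim_{g} X_{k}^{g}\}_{k}$ is a tower of weak equivalences between contractible spaces, and its homotopy limit is contractible. I expect the main obstacle to be precisely the bookkeeping around the varying metric --- establishing directedness of the relevant metric posets, cofinality of the inclusions between them, and that refinement and restriction act through compatible weak equivalences --- since once that is in place the corollary is a formal assembly of Theorem~\ref{thm.qlocaction} and Proposition~\ref{prop.hqloc}.
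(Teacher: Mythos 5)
Your argument is correct and follows essentially the same route as the paper's proof: the paper likewise assembles Theorem~\ref{thm.qlocaction} into an inverse tower of contractible spaces and takes its homotopy limit, handling the metric dependence by choosing one sufficiently fine metric $g_{k}$ per level $k$ and using the inclusions $Q_{g',k}\hookrightarrow Q_{g,k}$ for $g'$ finer than $g$ (these play the role of your homotopy colimit over the directed poset of metrics). The only step you assert without justification that the paper spells out is why a compatible tower of actions of the pieces $(\hdiFrob_{d})_{\leq k}$ amounts to an action of $\hdiFrob_{d}$ itself: one identifies $\hlim_{k}\hom\bigl((\hdiFrob_{d})_{\leq k},\End(\Chains_{\bullet}(M))\bigr)$ with $\hom\bigl(\hcolim_{k}(\hdiFrob_{d})_{\leq k},\End(\Chains_{\bullet}(M))\bigr)$ and uses that each filtered piece is fibrant-cofibrant, so that $\hcolim_{k}(\hdiFrob_{d})_{\leq k}\simeq\colim_{k}(\hdiFrob_{d})_{\leq k}=\hdiFrob_{d}$ is a cofibrant replacement of $\Frob_{d}$.
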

\begin{proof}
  Given a metric $g$, let $Q_{g,k}$ denote the space of actions of $(\hdiFrob_{d})_{\leq k} \to \qloc_{g}$ constructed in Theorem~\ref{thm.qlocaction}.  If $g'$ is finer than $g$, then we have an inclusion $Q_{g',k} \mono Q_{g,k}$.  We also have restrictions $Q_{g,k+1} \to Q_{g,k}$.  For each $k$, choose $g_{k}$ fine enough so that $Q_{g_{k},k}$ is contractible.  We thereby get an inverse system $\dots \to Q_{g_{k+1},k+1}\to Q_{g,k}\to \dots \to Q_{g_{0},0}$ of spaces, each of which is contractible, and therefore its homotopy limit $\hlim Q_{g_{k},k}$ is contractible.  
  The inclusion $\qloc_{g} \mono \End(\Chains_{\bullet}(M))$ provides a map  $\hlim Q_{g_{k},k} \to \hlim \bigl( \hom\bigl( (\hdiFrob_{d})_{\leq k} ,\End(\Chains_{\bullet}(M))\bigr)\bigr) \simeq \hom\bigl( \hcolim (\hdiFrob_{d})_{\leq k} ,\End(\Chains_{\bullet}(M))\bigr)$.  But each $(\h^{\di}\Frob_{d})_{\leq k}$ is fibrant-cofibrant, and so $\hcolim (\hdiFrob_{d})_{\leq k} \simeq \colim (\hdiFrob_{d})_{\leq k} = \hdiFrob_{d}$ is a cofibrant replacement of $\Frob_{d}$.
\end{proof}

  The obstruction theory arguments in the proof of Theorem~\ref{thm.qlocaction} are not enough to also construct an action of the properad $\hprFrob_{d}$: because of the presence of genus, there are many generators whose obstructions may not vanish, or for which the space of choices has nontrivial higher homotopies.  Indeed:

\begin{theorem}\label{thm.no1}
  When $M = \RR$, there does not exist a quasilocal $\hprFrob_{1}$-algebra structure on $\Chains_{\bullet}(\RR)$ extending the canonical $\hdiFrob_{1}$ action.
\end{theorem}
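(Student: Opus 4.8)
The plan is to exhibit a single generator of $\hprFrob_1 = \DD(\shLB_1)$ whose obstruction to being represented quasilocally is nonzero, i.e.\ a closed element in some $\qloc_{g}(m,n)$ whose class in $\H_\bullet(\qloc_g(m,n))$ does not vanish but which would need to be a boundary for an extension to exist. Since $d=1$ is odd, by the remarks after Proposition~\ref{prop.koszulity} the corolla with $m$ inputs and $n$ outputs sits in homological degree $e_\Gamma + 1\cdot(1-m) = e_\Gamma + 1 - m$, and by Proposition~\ref{prop.hqloc}, for sufficiently fine $g$ one has $\H_\bullet(\qloc_g(m,n)) \cong \H^{1-\bullet}(\RR)[-m]$, which is one-dimensional, concentrated in homological degree $1 - m$. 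So for $m \geq 2$ there is a single nontrivial homology class, in a \emph{negative} degree, that lies below the degree window available to $e_\Gamma = 0$ generators but can be reached by genus-bearing generators. The first place this can cause trouble is $(m,n,\beta)$ with $m = n = 2$ and genus-one: I would pinpoint the generator $\Gamma_0$ of $\shLB_1$-type whose associated properad generator in $\DD(\shLB_1)$ has syzygy degree making $\deg(\Gamma_0) = 1 - 2 = -1$ while $\eta(\partial\Gamma_0)$, a sum over genus-one graphs with two vertices, is forced (by the obstruction-theoretic choices already made at genus $0$ and lower) to represent the generator of $\H_{-1}(\qloc_g(2,2)) \cong \H^{2}(\RR)[-2]$'s dual — but wait, $\H^{2}(\RR) = 0$, so I must instead locate the class in $\H^{0}(\RR)[-2] = \H_{-2}(\qloc_g(2,2))$, i.e.\ the obstruction genuinely lives one degree lower and the relevant generator has $e_\Gamma - 1 = -1$, i.e.\ $e_\Gamma = 1$ with one unit of graph-genus contributing.

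Concretely, I would do the following. First, recall that $\Chains_\bullet(\RR) = \Omega^{-\bullet}_\cpt(\RR)[1]$ is quasi-isomorphic to $\KK$ in degree $0$ (represented by a compactly-supported $1$-form of total integral $1$) together with a degree-$1$ class that is killed by $\partial$ composed with integration — so $\H_\bullet(\Chains_\bullet(\RR)) = \KK$ in degree $0$. Second, compute $\H_\bullet(\qloc_g(2,2))$ directly: by Proposition~\ref{prop.hqloc} it is $\H^{1-\bullet}(\RR)[-2] = \KK$ in homological degree $-1$ (from $\H^0(\RR)$, shifted by $[-2]$ giving degree $-2$... — I need to fix the shift sign: $\H^{1-\bullet}(\RR)[-2]$ has its class where $1-\bullet = 0$ and then shifted, so in homological degree $\bullet = 1$ shifted down by $2$, landing in degree $-1$). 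So $\H_{-1}(\qloc_g(2,2)) = \KK$. Third, identify in $\DD(\shLB_1)(2,2)$ the generator $\Gamma$ of syzygy degree $e_\Gamma$ with $\deg(\Gamma) = e_\Gamma - 1 = 0$, so $e_\Gamma = 1$, whose differential $\partial\Gamma$ is a sum of genus-one two-vertex graphs. This $\partial\Gamma$ has homological degree $-1$. Fourth — the crux — show that for any quasilocal lift of the dioperadic action, $\eta(\partial\Gamma)$ represents a \emph{nonzero} multiple of the generator of $\H_{-1}(\qloc_g(2,2))$. This is where the genus "sees" something the trees cannot cancel: the two-vertex genus-one graphs close up a loop, and the pull-push integral over that loop computes (a multiple of) the Euler characteristic or self-intersection data of $\diag(\RR) \mono \RR^2$, which is detected precisely by $\H^0(\RR)$. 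Because $\RR$ is contractible there is no room to adjust the genus-$0$ data to kill this; the obstruction is rigid.

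The hard part will be step four: making the claim "$\eta(\partial\Gamma)$ is cohomologically nonzero no matter how the lower-syzygy generators were chosen" fully rigorous. I expect to argue it as follows: the map $\eta$ on syzygy-degree-$0$ generators is a Thom form by hypothesis, and $\eta(\partial\Gamma)$ is a universal polynomial expression in these Thom forms and their homotopies; passing to homology, only the leading "numerical" part survives, and it equals a nonzero rational number times the fundamental class, computable by evaluating the associated graph-integral on $\RR$ (essentially an integral of a product of bump functions and their primitives along a circle of composition, which does not vanish for parity reasons — $d=1$ odd means the relevant sign is $+1$, not $-1$, so the two tree-terms that would cancel in even dimensions instead add). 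Thus the generator $\Gamma$ cannot be represented, the obstruction-theoretic induction of Theorem~\ref{thm.qlocaction} breaks at this stage, and no quasilocal $\hprFrob_1$-action extending the canonical $\hdiFrob_1$-action on $\Chains_\bullet(\RR)$ exists. I would also remark that this is consistent with, and in fact a shadow of, the non-formality phenomena that Conjecture~\ref{conj} ties to $\E_d$ formality; but the proof itself needs only the elementary computation on $\RR$.
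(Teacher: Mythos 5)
Your overall strategy --- run the obstruction theory of Theorem~\ref{thm.qlocaction} for the properad and exhibit one generator whose obstruction class is closed but not exact in the relevant $\qloc_g(m,n)$ --- is exactly the paper's approach. But you have located the obstruction in the wrong place, and the degree bookkeeping that led you there is flawed. You apply the tree formula $\deg(\Gamma)=e_\Gamma+d(1-m_\Gamma)$ to genus-bearing generators, where it no longer holds: genus shifts the degree. In the model the paper actually uses (since $\Frob_1=\invFrob_1$ is Koszul one may take $\hprFrob_1=\DD(\LB_1)$), the generator dual to a basis graph of $\LB_1(m,n,\beta)$ --- which has $n+\beta-1$ cobrackets of degree $-1$ and $m+\beta-1$ brackets of degree $0$ --- sits in homological degree $n+\beta-2$. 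For your candidate $(m,n,\beta)=(2,2,1)$ this is degree $1$, so its obstruction lands in degree $0$, while $\H_\bullet(\qloc_g(2,2))\cong\H^{1-\bullet}(\RR)[-2]$ is concentrated in degree $-1$: the obstruction group there is zero and nothing can go wrong at that spot. (Indeed, the computations of~\cite{chainfrob} show that \emph{all} obstructions vanish except one.) The genuine obstruction lives in $\qloc(1,1)$ in degree $0$, where the homology is spanned by the Thom class: it is attached to the genus-$2$, arity-$(1,1)$ ``two-loop wheel'' built from two cobrackets and two brackets (the graph displayed in the paper and again in Corollary~\ref{cor.1dsuccess}), whose dual generator has degree $1+2-2=1$ and whose boundary is a degree-$0$ cycle.

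The second gap is that your step four --- the nonvanishing of the obstruction class independently of all lower choices --- is the entire content of the theorem and is not carried out. The heuristics you offer (Euler characteristic of $\diag(\RR)\mono\RR^2$, a parity argument in odd $d$) do not substitute for the computation, and as set up they would be applied to a class that is zero for degree reasons anyway. The actual argument evaluates the pull--push integral attached to the two-loop wheel against the Thom-form data and finds that the obstruction acts on $\Chains_\bullet(\RR)$ as multiplication by $-\tfrac1{12}$; one must also check that changing the quasilocal homotopies at lower stages alters $\eta(\partial\Gamma)$ only by exact terms, so this number is well defined and its nonvanishing kills every extension. That computation (done in~\cite{chainfrob}, and mirrored by Merkulov's and Willwacher--Dito's wheel obstructions to loopless universal quantization) is the irreplaceable core of the proof; without it, and with the obstruction misplaced, the proposal does not establish the theorem.
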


\begin{proof}[Outline of proof]
  See~\cite{chainfrob} for details.  Since $\Frob_{1} = \invFrob_{1}$ is Koszul, one may use the properad $\sh^{\pr}\Frob_{1} = \DD(\LB_{1})$ as its cofibrant replacement.  The obstruction corresponding to the graph 
  \begin{tikzpicture}[baseline=(basepoint)]
    \path (0,12pt) coordinate (basepoint) (0pt,5pt) node[dot] {} (6pt,12pt) node[dot] {} (-4pt,18pt) node[dot] {} (2pt,25pt) node[dot]{};
    \draw (0pt,-2pt) -- (0pt,5pt);
    \draw[] (0pt,5pt) -- (6pt,12pt);
    \draw[] (6pt,12pt) -- (-4pt,18pt);
    \draw[] (0pt,5pt) .. controls +(-6pt,6pt) and +(-6pt,-6pt) .. (-4pt,18pt);
    \draw[] (-4pt,18pt) -- (2pt,25pt);
    \draw[] (6pt,12pt) .. controls +(6pt,6pt) and +(6pt,-6pt) .. (2pt,25pt);
    \draw (2pt,25pt) -- (2pt,32pt);
  \end{tikzpicture}
%  \begin{tikzpicture}[baseline=(basepoint)]
%    \path (0,12pt) coordinate (basepoint) (0pt,3pt) node[dot] {} (8pt,11pt) node[dot] {} (-4pt,19pt) node[dot] {} (4pt,27pt) node[dot]{};
%    \draw (0pt,-5pt) -- (0pt,3pt);
%    \draw[] (0pt,3pt) -- (8pt,11pt);
%    \draw[] (8pt,11pt) -- (-4pt,19pt);
%    \draw[] (0pt,3pt) .. controls +(-6pt,6pt) and +(-6pt,-6pt) .. (-4pt,19pt);
%    \draw[] (-4pt,19pt) -- (4pt,27pt);
%    \draw[] (8pt,11pt) .. controls +(6pt,6pt) and +(6pt,-6pt) .. (4pt,27pt);
%    \draw (4pt,27pt) -- (4pt,35pt);
%  \end{tikzpicture}
   is multiplication by $-\frac1{12}$, and in particular is not exact in $\qloc(1,1)$.
\end{proof}

The same argument obstructs $g$-quasilocal $\hprFrob_{1}$ actions on $\Chains_{\bullet}(S^{1})$ when the metric $g$ is sufficiently fine.  On the other hand, the calculations of~\cite{chainfrob} show that the graph 
  \begin{tikzpicture}[baseline=(basepoint)]
    \path (0,12pt) coordinate (basepoint) (0pt,5pt) node[dot] {} (6pt,12pt) node[dot] {} (-4pt,18pt) node[dot] {} (2pt,25pt) node[dot]{};
    \draw (0pt,-2pt) -- (0pt,5pt);
    \draw[] (0pt,5pt) -- (6pt,12pt);
    \draw[] (6pt,12pt) -- (-4pt,18pt);
    \draw[] (0pt,5pt) .. controls +(-6pt,6pt) and +(-6pt,-6pt) .. (-4pt,18pt);
    \draw[] (-4pt,18pt) -- (2pt,25pt);
    \draw[] (6pt,12pt) .. controls +(6pt,6pt) and +(6pt,-6pt) .. (2pt,25pt);
    \draw (2pt,25pt) -- (2pt,32pt);
  \end{tikzpicture}
%  \begin{tikzpicture}[baseline=(basepoint)]
%    \path (0,12pt) coordinate (basepoint) (0pt,3pt) node[dot] {} (8pt,11pt) node[dot] {} (-4pt,19pt) node[dot] {} (4pt,27pt) node[dot]{};
%    \draw (0pt,-5pt) -- (0pt,3pt);
%    \draw[] (0pt,3pt) -- (8pt,11pt);
%    \draw[] (8pt,11pt) -- (-4pt,19pt);
%    \draw[] (0pt,3pt) .. controls +(-6pt,6pt) and +(-6pt,-6pt) .. (-4pt,19pt);
%    \draw[] (-4pt,19pt) -- (4pt,27pt);
%    \draw[] (8pt,11pt) .. controls +(6pt,6pt) and +(6pt,-6pt) .. (4pt,27pt);
%    \draw (4pt,27pt) -- (4pt,35pt);
%  \end{tikzpicture}
   is the only obstruction to defining a quasilocal $\DD(\LB_{1})$ action on $\Chains_{\bullet}(\RR)$.  It follows that:

\begin{corollary} \label{cor.1dsuccess}
  Let $\surinvLB_{1}$ denote the quotient of the properad $\LB_{1}$ by the ideal generated by the graph
%  \begin{tikzpicture}[baseline=(basepoint)]
%    \path (0,12pt) coordinate (basepoint) (0pt,3pt) node[dot] {} (8pt,11pt) node[dot] {} (-4pt,19pt) node[dot] {} (4pt,27pt) node[dot]{};
%    \draw (0pt,-5pt) -- (0pt,3pt);
%    \draw[] (0pt,3pt) -- (8pt,11pt);
%    \draw[] (8pt,11pt) -- (-4pt,19pt);
%    \draw[] (0pt,3pt) .. controls +(-6pt,6pt) and +(-6pt,-6pt) .. (-4pt,19pt);
%    \draw[] (-4pt,19pt) -- (4pt,27pt);
%    \draw[] (8pt,11pt) .. controls +(6pt,6pt) and +(6pt,-6pt) .. (4pt,27pt);
%    \draw (4pt,27pt) -- (4pt,35pt);
%  \end{tikzpicture}
  \begin{tikzpicture}[baseline=(basepoint)]
    \path (0,12pt) coordinate (basepoint) (0pt,5pt) node[dot] {} (6pt,12pt) node[dot] {} (-4pt,18pt) node[dot] {} (2pt,25pt) node[dot]{};
    \draw (0pt,-2pt) -- (0pt,5pt);
    \draw[] (0pt,5pt) -- (6pt,12pt);
    \draw[] (6pt,12pt) -- (-4pt,18pt);
    \draw[] (0pt,5pt) .. controls +(-6pt,6pt) and +(-6pt,-6pt) .. (-4pt,18pt);
    \draw[] (-4pt,18pt) -- (2pt,25pt);
    \draw[] (6pt,12pt) .. controls +(6pt,6pt) and +(6pt,-6pt) .. (2pt,25pt);
    \draw (2pt,25pt) -- (2pt,32pt);
  \end{tikzpicture}%
  .  (This  is a souped-up version of involutivity, hence the name ``surinvolutive.'')  The subproperad $\DD(\surinvLB_{1})$ of $\sh^{\pr}\Frob_{1} = \DD(\LB_{1})$ does act quasilocally on $\Chains_{\bullet}(\RR)$, extending the homotopy action of the dioperad $\Frob_{1}$, and the space of such actions is contractible. \qedhere
\end{corollary}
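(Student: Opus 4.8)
The plan is to rerun the obstruction-theoretic construction of Theorem~\ref{thm.qlocaction} and Corollary~\ref{cor.qlocaction}, with the cofibrant dioperad $\hdiFrob_1$ replaced by the cofibrant properad $\DD(\surinvLB_1)$, and to feed in the calculations of~\cite{chainfrob} at the one place where the naive obstruction theory no longer closes up. First I would assemble the formal input. The surinvolutivity relations span a genus-graded properadic ideal $I\subseteq\LB_1$ that is supported in genus $\geq 2$, since it is generated by the displayed four-vertex graph $\Gamma_0\in\LB_1(\mathbf1,\mathbf1,2)$; hence $\surinvLB_1=\LB_1/I$ agrees with $\LB_1$ in genus $0$ and is again positive and locally finite-dimensional, so Proposition~\ref{prop.ddish} makes $\DD(\surinvLB_1)$ cofibrant. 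The generators of $\DD(\surinvLB_1)$ sit inside those of $\DD(\LB_1)=\sh^{\pr}\Frob_1$ as the annihilator of the image of $I$; because $I$ is a two-sided ideal, the cocomposition on these generators dual to composition in $\LB_1$ carries this annihilator into its own tensor square, so the bar differential of $\DD(\LB_1)$ --- which is built from that cocomposition --- preserves the sub-properad the annihilator generates. Thus $\DD(\surinvLB_1)\subseteq\DD(\LB_1)=\sh^{\pr}\Frob_1$ is a sub-dg-properad, and because $I$ lives in genus $\geq 2$ the two share the same genus-$0$ part.

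Next I would run the induction exactly as in the proof of Theorem~\ref{thm.qlocaction}: filter $\DD(\surinvLB_1)$ by syzygy degree, so that each filtered piece is free on finitely many generators whose derivatives lie in the previous piece; choose a Riemannian metric on $\RR$ fine enough that Proposition~\ref{prop.hqloc} applies to every generator in the piece at hand; and extend the homomorphism $\eta : \DD(\surinvLB_1)\to\qloc$ one generator at a time, the syzygy-$0$ generators acting via Thom forms around the diagonal. The decisive simplification particular to $M=\RR$ is numerical: $\H^{1-\bullet}(\RR)$ is one-dimensional, concentrated in homological degree $1$, so Proposition~\ref{prop.hqloc} gives $\H_\bullet\bigl(\qloc_{\ell,g}(m,n)\bigr)\cong\KK$ concentrated in homological degree $1-m$. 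Hence for a generator $\Gamma$ with $m_\Gamma$ inputs the obstruction $\eta(\partial\Gamma)$ --- $\partial$-closed by induction, of homological degree $\deg(\Gamma)-1$ --- has class in $\H_{\deg(\Gamma)-1}\bigl(\qloc(m_\Gamma,n_\Gamma)\bigr)$, which vanishes automatically unless $\deg(\Gamma)-1=1-m_\Gamma$; and whenever the obstruction vanishes, the set of admissible values for $\eta(\Gamma)$ is an affine space over the space of $\partial$-closed quasilocal maps of the appropriate degree, hence contractible. So the induction can fail only at a generator sitting in this one critical homological degree, and only through a nonzero scalar.

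Finally I would invoke~\cite{chainfrob}, whose calculations show that among the critical-degree generators of $\DD(\LB_1)$ the only one carrying a nonvanishing obstruction is $\gamma_{\Gamma_0}$, dual to the displayed graph, where the obstruction equals $-\tfrac1{12}$ times the identity of $\Chains_\bullet(\RR)$ (this is the content of Theorem~\ref{thm.no1}). But $\Gamma_0$ generates $I$, so $\gamma_{\Gamma_0}$ is \emph{not} one of the generators of $\DD(\surinvLB_1)$; and, by the sub-dg-properad property established above, the bar differential of every generator of $\DD(\surinvLB_1)$ involves only generators of $\DD(\surinvLB_1)$, so the obstruction theory for $\DD(\surinvLB_1)$ never references $\gamma_{\Gamma_0}$ --- it is exactly the obstruction theory for $\DD(\LB_1)$ with that single generator deleted. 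Therefore every obstruction vanishes, every space of choices is contractible, and --- passing to the homotopy limit over the syzygy filtration and over refinements of the metric, exactly as in Corollary~\ref{cor.qlocaction} --- the space of quasilocal $\DD(\surinvLB_1)$-actions on $\Chains_\bullet(\RR)$ is contractible; since its syzygy-$0$ generators act via Thom forms, such an action extends the canonical homotopy action of the dioperad $\Frob_1$. The real content, and the main obstacle, is entirely contained in the cited computation: one needs not merely that $\Gamma_0$ obstructs but that \emph{nothing else} does, i.e.\ that deleting $\gamma_{\Gamma_0}$ from the generating set annihilates every critical-degree obstruction; the subsidiary point demanding care is the bookkeeping that matches ``quotient $\LB_1$ by the ideal generated by $\Gamma_0$'' on the $\LB$ side with ``delete $\gamma_{\Gamma_0}$ and close up under the bar differential'' on the $\DD$ side.
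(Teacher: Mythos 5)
Your proposal is correct and follows the paper's intended argument: the paper itself offers no separate proof, deducing the corollary directly from the preceding sentence that the cited computation in~\cite{chainfrob} identifies the displayed genus-$2$ graph as the \emph{only} nonvanishing obstruction to a quasilocal $\DD(\LB_{1})$-action on $\Chains_{\bullet}(\RR)$. Your expansion --- rerunning the obstruction theory of Theorem~\ref{thm.qlocaction} for the sub-dg-properad $\DD(\surinvLB_{1})\subseteq\DD(\LB_{1})$, checking that the annihilator of the ideal is closed under the bar differential and unchanged in genus $0$, and observing that the one bad generator is exactly the one removed --- is a faithful and somewhat more careful rendering of the same argument, with the essential content correctly located in the citation.
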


\section{Classical and quantum AKSZ theories}\label{section.aksz}

\begin{definition} \label{defn.aksz}
Let $M$ be an oriented $d$-dimensional manifold and $V$ an algebra for $\shLB_{d}$.  By Theorem~\ref{thm.qlocaction}, there is a canonical contractible space of  $\hdiFrob_{d}$-algebra structures on $\Chains_{\bullet}(M)$.  Via Proposition~\ref{prop.tensorformulae}, this gives in turn a canonical contractible space of $\shLB_{0}$-algebra structures on $\Chains_{\bullet}(M)\otimes V$.  The \define{classical Poisson AKSZ theory} with source $M$ and target $V$ is $\Chains_{\bullet}(M)\otimes V$ with any of these equivalent $\shLB_{0}$-algebra structures.
\end{definition}

The physical interpretation of Definition~\ref{defn.aksz} is the following.  
By Proposition~\ref{prop.poisd}, the chain complex $V$ is really the vector space of linear functions on a semistrict homotopy $\Pois_{d}$ infinitesimal manifold $\spec\widehat\Sym(V)$. 
 The chain complex $\Chains_{\bullet}(M) \otimes V$ is the vector space of linear functions on $\Maps\bigl(\T[1]M,\spec\widehat\Sym(V)\bigr)$, which is the derived space of fields $\phi: M \to \spec\widehat\Sym(V)$ satisfying the field equation $\d\phi = 0$.  The $\shLB_{0}$ action on $\Chains_{\bullet}(M) \otimes V$ gives a semistrict homotopy $\Pois_{0}$ algebra structure on the algebra $\widehat\Sym\bigl(\Chains_{\bullet}(M) \otimes V\bigr)$ of all observables.

Costello and Gwilliam have proposed that the quantization problem in quantum field theory is precisely the deformation problem from $\Pois_{0}$-algebras to $\E_{0}$-algebras~\cite{costellogwilliam}.  For semistrict homotopy $\Pois_{0}$ infinitesimal manifolds, the corresponding deformation problem is, via Proposition~\ref{shbdfthm}:
\begin{definition}
  A \define{quantization} of an $\shLB_{0}$-algebra $W$ is an $\shBDF$-algebra structure on $W$ that pulls back to the given $\shLB_{0}$-algebra structure under the canonical map $\shLB_{0} \to \shBDF$.
\end{definition}

For quantizations of classical AKSZ theories, it makes sense moreover to ask that $\shBDF$ acts quasilocally.  {A priori}, the quantization problem might be obstructed.  There are, however, universal constructions of some quantum AKSZ theories:

\begin{definition}\label{defn.quantumaksz}
  Let $P$ be a genus-graded positive locally finite-dimensional properad whose genus-zero part is any cofibrant replacement $\h^{\di}\Frob_{d}$ of the dioperad $\Frob_{d}$.  Then the genus-zero part of its bar dual $\DD(P)$ is a cofibrant replacement $\h^{\di}\LB_{d}$ of $\LB_{d}$, and hence any action of $\DD(P)$ on a chain complex $V$ makes $V$ into an $\shLB_{d}$-algebra.
  
  Let $M$ be a $d$-dimensional oriented manifold, and suppose that the $\h^{\di}\Frob_{d}$ action on $\Chains_{\bullet}(M)$ extends to a quasilocal action of $P$.  Then Proposition~\ref{prop.tensorformulae5} provides a quasilocal action of $\hBDF$ on $\Chains_{\bullet}(M) \otimes V$ extending the $\shLB_{0}$ action for any $\DD(P)$-algebra $V$.  This is a \define{path integral quantization} of the AKSZ theory on $M$ valued in $V$.
\end{definition}

The name is justified by the remarks after the proof of Proposition~\ref{prop.tensorformulae5}.  As an example, Corollary~\ref{cor.1dsuccess} implies that the AKSZ theory on $\RR$ with target any surinvolutive $\LB_{1}$-algebra has a canonical path-integral quantization.

We will conclude this article by giving evidence in support of Conjecture~\ref{conj}, which relates  path integral quantization of general AKSZ theories on $\RR^{d}$ to the formality of the $\E_{d}$ operad.

\begin{definition} 
A \define{formality morphism} of a dg algebraic object $X$ is a homomorphism $f$ from a cofibrant replacement of $X$ to a cofibrant replacement of $\H_{\bullet}(X)$ such that $\H_{\bullet}(f) : \H_{\bullet}(X) \to \H_{\bullet}(\H_{\bullet}(X)) = \H_{\bullet}(X)$ is the identity.
\end{definition}

\begin{conjecture} \label{conj}
  For $d\geq 2$, the space of quasilocal properadic $\h\invFrob_{d}$ actions on $\Chains_{\bullet}(\RR^{d})$ (such that the generators of $\h\Frob_{d}$ that lift basis elements of $\invFrob_{d}$ act by Thom forms) 
  is homotopy equivalent to the space of formality morphisms  of the operad $\E_{d}$ of chains on the space of configurations of points in $\RR^{d}$.
\end{conjecture}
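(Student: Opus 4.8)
We indicate a strategy toward Conjecture~\ref{conj}. Both spaces in the statement are, once nonempty, homotopy-homogeneous spaces under the relevant derived automorphism group: the space of quasilocal properadic $\h\invFrob_{d}$-actions (relative to the fixed, Thom-form, genus-zero data) under $\Aut^{\h}$ of $\invFrob_{d}$ computed inside $\qloc(\RR^{d})$, and the space of $\E_{d}$ formality morphisms under $\Aut^{\h}(\E_{d})$. So it suffices to (i) exhibit a single point on each side, matched under a natural construction, and (ii) identify the two controlling $L_{\infty}$-algebras. In both directions the comparison will be implemented by Kontsevich-style configuration-space integrals over the Fulton--MacPherson/Axelrod--Singer compactifications $\overline{\Config}_{\bullet}(\RR^{d})$, whose integral kernels are exactly the quasilocal pull-push operations of Section~\ref{section.frobchains}.

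On the properadic side, one runs the obstruction theory of the proof of Theorem~\ref{thm.qlocaction} one genus at a time. Having fixed the contractible space of genus-zero data, the problem of extending to a properadic $\h\invFrob_{d}$-action is governed, generator by generator, by the homology of $\qloc(\RR^{d})(m,n)$, which by Proposition~\ref{prop.hqloc} is $\KK$ concentrated in homological degree $d(1-m)$. Thus the only generators $\Gamma$ of $\h\invFrob_{d}$ that can contribute an obstruction or a positive-dimensional space of choices are those with $\deg\Gamma$ equal to $d(1-m_{\Gamma})$ or $d(1-m_{\Gamma})+1$; these assemble, together with the part of the internal differential $\sum_{\Gamma}\Circ_{\Gamma}$ that connects them, into a complex which --- reading a genus-$\beta$ properadic generator as a $\beta$-loop Feynman diagram with propagator a Thom form for $\diag(\RR^{d})\subset\RR^{2d}$, exactly as in Remark~\ref{remark.bvconnection} --- one identifies with (a hairy version of) the Kontsevich graph complex $\mathsf{GC}_{d}$. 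The length-scale filtration of Corollary~\ref{cor.filtered} should match the arity/loop-order filtration of $\mathsf{GC}_{d}$, so that the homotopy limit appearing in Corollary~\ref{cor.qlocaction} computes the homotopy fixed points of $\mathsf{GC}_{d}$ acting on the fixed genus-zero data.

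On the $\E_{d}$ side, recall $\H_{\bullet}(\E_{d}) = \Pois_{d}$, that $\Pois_{d}$ is Koszul and Koszul self-dual up to shift, and that the dictionary of Propositions~\ref{prop.poisd} and~\ref{shbdfthm}, which trades $\shLB_{d}$- and $\hBDF$-structures for continuous operations on completed symmetric algebras, passes between the \emph{operad} $\E_{d}$ --- equivalently, the operations of $\E_{d}$-algebras of the form $\widehat\Sym(\Chains_{\bullet}(\RR^{d})\otimes V)$ --- and the \emph{properad} $\invFrob_{d}$ and its resolutions. Under this dictionary a formality morphism of $\E_{d}$ produces, and is produced by, a quasilocal properadic $\h\invFrob_{d}$-action on $\Chains_{\bullet}(\RR^{d})$ extending the canonical dioperadic one; Kontsevich's construction realizes one such morphism from a choice of propagator, and the deformation theory of this morphism is again controlled by $\mathsf{GC}_{d}$ (the work of Fresse and Willwacher on homotopy automorphisms of $\E_{d}$, and of Tamarkin for $d=2$). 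Matching the two copies of $\mathsf{GC}_{d}$ and their actions then upgrades the bijection of basepoints to a homotopy equivalence.

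The main obstacle is executing step (ii) carefully, and in particular reconciling operads with properads: since the universal enveloping properad functor is not exact (the remark following Proposition~\ref{prop.ddish}), one cannot naively transport resolutions and must work throughout with the properad of operations on the specific symmetric-algebra representations, checking that this loses no information on the relevant deformation complexes. Subordinate to this are: proving the configuration-space integrals converge and that Stokes' theorem on $\overline{\Config}_{\bullet}(\RR^{d})$ reproduces exactly the properadic differential $\sum_{\Gamma}\Circ_{\Gamma}$ --- the contributions of the higher-codimension boundary strata must vanish, which is a dimension count that succeeds precisely for $d\geq 2$ and fails for $d=1$, where it produces the anomalous factor $-\tfrac1{12}$ of Theorem~\ref{thm.no1} and the accompanying nonexistence; and verifying that the length-scale filtration is fine enough that the homotopy limit of Corollary~\ref{cor.qlocaction} agrees with the derived mapping space, so that nothing is lost in passing from the filtered-piece statements to the full properad.
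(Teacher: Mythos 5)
First, a framing point: the statement you are addressing is a \emph{conjecture}, and neither the paper nor your text proves it. What the paper offers is a two-directional sketch in its closing paragraphs: one direction is Corollary~\ref{cor.final} (a quasilocal properadic $\h\invFrob_{d}$ action yields, via the homological-perturbation construction of Assertion~\ref{thm.ed}, a universal wheel-free $\E_{d}$ quantization of $\Pois_{d}$-algebras, hence formality), and the reverse direction runs through factorization algebras: formality gives a universal quantization of AKSZ theories, which unpacks to \emph{some} properad acting quasilocally on $\Chains_{\bullet}(\RR^{d})$, expected but not shown to be $\h\invFrob_{d}$. Your route is genuinely different: rather than constructing maps between the two spaces by passing through universal quantization, you propose to exhibit both as homotopy-homogeneous spaces under derived automorphism groups and to identify both controlling $L_{\infty}$-algebras with (a hairy version of) the graph complex $\mathsf{GC}_{d}$. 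This meshes well with the Remark following the conjecture, which recasts both sides as spaces of formality morphisms --- of $\E_{d}$ and of $\QLoc(\RR^{d})$ respectively --- for which the torsor formalism and the known description of homotopy automorphisms of $\E_{d}$ are available. What your approach buys is a direct attack on the ``homotopy equivalence'' clause, which the paper's sketch does not really address (it only produces implications of nonemptiness in each direction). What it costs is that your key identification --- the deformation complex of quasilocal properadic actions, assembled from Proposition~\ref{prop.hqloc} and the degree bookkeeping of Theorem~\ref{thm.qlocaction}, with $\mathsf{GC}_{d}$ compatibly with the length-scale filtration of Corollary~\ref{cor.filtered} --- is asserted rather than derived, and is of a difficulty comparable to the conjecture itself.

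The concrete gap is the basepoint in your step (i). The paper constructs only the \emph{dioperadic} action (Theorem~\ref{thm.qlocaction}, Corollary~\ref{cor.qlocaction}); whether any quasilocal \emph{properadic} $\h\invFrob_{d}$ action on $\Chains_{\bullet}(\RR^{d})$ exists for $d\geq 2$ is open, and for $d=1$ it is false (Theorem~\ref{thm.no1}). A homogeneous-space argument is vacuous until the space is shown nonempty, and your proposed source of a point --- configuration-space integrals over Fulton--MacPherson compactifications with vanishing contributions from higher-codimension strata when $d\geq 2$ --- is precisely the unproven hard step. Note in particular that quasilocality in the sense of Section~\ref{section.frobchains} requires integral kernels supported within a fixed distance $\ell$ of the diagonal, whereas Kontsevich-style propagators on $\RR^{d}$ have full support and only decay at infinity; truncating them to land in $\qloc$ threatens exactly the Stokes-theorem cancellations your strategy relies on. Until that point is produced, your strategy and the paper's reduce the conjecture to the same unresolved construction.
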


We abuse notation and write $\E_{d}$ for what is normally called $\Chains_{\bullet}(\E_{d})$.  When $d \geq 2$, the operad $\E_{d}$ is known to be formal; since $\H_{\bullet}(\E_{d}) = \Pois_{d}$, formality is equivalent to the existence of a universal quantization procedure from homotopy $\Pois_{d}$ algebras to $\E_{d}$ algebras~\cite{MR1718044,LambrechtsVolic2008}.  When $d = 1$ in Conjecture~\ref{conj}, one should say that the space of $\h\invFrob_{1}$ actions on $\Chains_{\bullet}(\RR)$ is equivalent to the space of universal wheel-free deformation quantization procedures for Poisson algebras; both sides of this equivalence are empty (Theorem~\ref{thm.no1} and~\cite{Dito13,Willwacher2013}).

\begin{remark}
  One may easily compute $\H_{\bullet}(\QLoc(\RR^{d})) = \invFrob_{d}$, with the basis of $\invFrob_{d}$ represented by Thom forms.  Thus a quasilocal properadic $\h\invFrob_{d}$ action on $\Chains_{\bullet}(\RR^{d})$ is the same as a formality morphism for $\QLoc(\RR^{d})$, and Conjecture~\ref{conj} may be rephrased as saying the operad $\Chains_{\bullet}(\E_{d})$ and the properad $\QLoc(\RR^{d})$ have canonically homotopy-equivalent spaces of formality morphisms.
\end{remark}

In our explanation of Conjecture~\ref{conj}, we will use the following model of $\E_{d}$:
\begin{definition}
  We continue to let $\m$ denote a set of size $m \in \NN$.
  Let $\Config_{>1}(\m,\mathbb R^{d})$ denote the manifold of maps $\m \to \RR^{d}$ such that the image of any pair of distinct points in $\m$ are sent to points at distance strictly greater than $1$ for the standard metric on $\RR^{d}$.  The homotopy operad $\E_{d}$ satisfies $\E_{d}(\m) = \Chains_{\bullet}(\Config_{>1}(\m,\RR^{d}))$.
  
  With this presentation, operadic composition is not strictly defined.  Rather, one can choose other models of $\E_{d}$, including the operads of little disks or of little rectangles.  There are  quasiisomorphisms relating our model to these, and a homotopy operad structure can be pulled through such quasiisomorphisms via homotopy transfer theory \cite{MR1361938,Markl1999,vdLaan2003,Wilson2004}.
\end{definition}

The main piece of evidence in favor of Conjecture~\ref{conj} is the following result, which is a quasilocal version of the fact that locally constant factorization algebras on $\RR^{d}$ are the same as $E_{d}$ algebras~\cite[Theorem~5.3.4.10]{DAG}:

\begin{claim}\label{thm.ed}
  Let $V$ be any chain complex with differential $\partial_{V}$ extended to $\widehat\Sym(V)$ as a derivation.
  Any quasilocal action of $\hBDF$ on $\Chains_{\bullet}(\RR^{d}) \otimes V$ induces an $\E_{d}$-algebra structure on $\widehat{\Sym}(V)\[\hbar\] = \widehat\Sym(V\oplus \KK\hbar)$, equipped with a differential of the form $\partial_{V} + o(1)$.
\end{claim}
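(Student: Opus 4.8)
The plan is to produce the $\E_d$-action on $\widehat\Sym(V)\[\hbar\]$ by transferring the given quasilocal $\hBDF$-action across the chain of quasiisomorphisms relating $\Chains_\bullet(\Config_{>1}(\m,\RR^d))$ to the quasilocal endomorphisms of $\Chains_\bullet(\RR^d)$, and then applying the homological perturbation lemma to peel off the differential. First I would recall that a quasilocal $\hBDF$-action on $\Chains_\bullet(\RR^d)\otimes V$ is, by Proposition~\ref{shbdfthm}, a degree-$(-1)$ operator $\Delta = \partial_V + o(1)$ on $\widehat\Sym(V)\[\hbar\]$ built from $\hbar$-weighted $m$th-order differential operators, satisfying $(\partial_V+\Delta)^2=0$; the quasilocality is what will let the geometry of configuration space enter. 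The point is that $\qloc(\RR^d)$ has homology $\invFrob_d$ with basis represented by Thom forms (as noted in the Remark following Conjecture~\ref{conj}), and the spaces $\Config_{>1}(\m,\RR^d)$, being homotopy equivalent to configuration spaces of points, compute exactly the components $\E_d(\m) \simeq \H_\bullet(\Config(\m,\RR^d))$ that carry the $\Pois_d=\H_\bullet(\E_d)$-structure dual to $\invFrob_d$.

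The key steps, in order: (1) Fix a complete metric on $\RR^d$ (the flat one, rescaled so that the length scales $\ell(k,m,n)$ of Corollary~\ref{cor.filtered} match the ``distance $>1$'' condition defining $\Config_{>1}$), so that $\qloc_\ell(m,n)$ literally records quasilocal kernels supported near the diagonal; (2) observe that evaluating such a kernel against a configuration of points in $\Config_{>1}(\m,\RR^d)$—i.e.\ pulling back along the map that records where the $m$ inputs and outputs sit—gives a map $\Chains_\bullet(\Config_{>1}(\m,\RR^d)) \to \qloc(m,\m)$, and that these maps are compatible with the (homotopy) operadic compositions on both sides up to coherent homotopy; (3) use the homotopy transfer theory cited in the definition of the $\E_d$-model (\cite{MR1361938,Markl1999,vdLaan2003,Wilson2004}) to push the $\hBDF$-action—equivalently, via Proposition~\ref{prop.tensorformulae5} and Proposition~\ref{prop.tensorformulae}, the resulting $\shLB_0$/$\hBDF$-structure—through this zig-zag to obtain a genuine $\E_d$-algebra structure on the underlying graded vector space $\widehat\Sym(V)\[\hbar\]$; (4) run the homological perturbation lemma (reviewed, per Remark~\ref{remark.bvconnection}, as in~\cite{Crainic04,MR2762538}) on the operator $\Delta=\partial_V + o(1)$: since $\Delta - \partial_V$ is ``small'' (a sum of $\hbar$-weighted differential operators, hence pro-nilpotent on the completed symmetric algebra with its $\hbar$-adic/order filtration), the perturbation converges and the transferred $\E_d$-structure acquires a differential of the advertised form $\partial_V + o(1)$, with the $o(1)$ part assembled from a sum over trees/graphs exactly as in the Feynman-diagram discussion after Proposition~\ref{prop.tensorformulae5}.

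The main obstacle I expect is step (2)–(3): making precise the claim that quasilocal kernels, evaluated against configurations, assemble into a homotopy-coherent morphism of homotopy operads, and controlling the convergence of the transferred structure maps. The subtlety is that operadic composition in $\qloc(\RR^d)$ is convolution of kernels, which enlarges the length scale (Proposition before Corollary~\ref{cor.filtered}), so one must work filtered piece by filtered piece and check that on each $(\hdiFrob_d)_{\leq k}$ the relevant configuration spaces are the ``distance $> \ell(k,\cdot,\cdot)$'' ones—this is precisely why the model $\Config_{>1}$ (rather than $\Config$) appears. A secondary technical point, flagged in the Remark after the ``action of a (di/pr)operad'' discussion, is that $\hBDF$ and the enveloping-properad issues mean one should be careful about whether limits/colimits over the filtration commute with the transfer; as in Corollary~\ref{cor.qlocaction}, fibrancy-cofibrancy of each $(\hFrob_d)_{\leq k}$ plus a $\hlim$/$\hcolim$ argument resolves this, so I do not expect it to be a genuine obstruction, only bookkeeping.
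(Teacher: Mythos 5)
Your proposal assembles several of the right ingredients (the perturbation lemma, the tension between quasilocality and the separation condition built into $\Config_{>1}$, the truncations needed to control convergence), but the central construction --- the actual map sending a chain $f\in\E_{d}(\m)$ to an operation $\widehat\Sym(V)\[\hbar\]^{\otimes m}\to\widehat\Sym(V)\[\hbar\]$ --- is missing, and the specific map you propose in step (2) does not exist in the form stated. A quasilocal kernel in $\qloc(m,n)$ is supported \emph{near} the diagonal of $M^{m+n}$, while a configuration in $\Config_{>1}(\m,\RR^{d})$ lies \emph{far} from the diagonal; ``evaluating'' the former against the latter gives essentially nothing, and in any case $\qloc(m,\m)$ consists of operations on $\Chains_{\bullet}(\RR^{d})$, not on $\widehat\Sym(V)\[\hbar\]$, so a map $\Chains_{\bullet}(\Config_{>1}(\m,\RR^{d}))\to\qloc(m,\m)$ would not by itself produce the desired algebra structure. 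The paper's mechanism is different and is really a factorization-algebra argument: first run the homological perturbation lemma (at the \emph{start}, not the end) against the point-insertion deformation retractions $\iota_{x}$, $p$, $h_{x}$ built from delta distributions, to produce corrected quasiisomorphisms $\tilde\iota_{x},\tilde p_{x}$ and the deformed differential $\partial_{V}+\tilde\Delta_{0}$ on $\widehat\Sym(V)\[\hbar\]$; then, for $f\in\E_{d}(\m)$, integrate the resulting insertion maps $\tau_{x}=\tilde\iota_{x}\circ\tilde p_{x}\circ\tilde\iota_{0}$ against $f$ over configuration space, multiply the outputs using the commutative product $\odot$ of $\widehat\Sym(V\otimes\Chains^{\dist}_{\bullet}(\RR^{d}))\[\hbar\]$, and project back down with $\tilde p_{0}$.

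The point you gesture at --- ``this is precisely why the model $\Config_{>1}$ rather than $\Config$ appears'' --- is then made precise not by matching filtration length scales to an operad morphism, but by a support argument for a single map: $\odot$ fails to be a chain map for $\partial+\Delta$ only on chains lying within distance $\frac12$ of a diagonal (after rescaling the metric so that the finitely many generators surviving in the truncation $\Sym^{<k}$ act $\frac12$-quasilocally), whereas $f$ vanishes whenever two of its points come within distance $1$; hence $(\partial+\Delta)\circ\odot\circ\tau_{f}-\odot\circ\tau_{f}\circ(\partial_{V}+\tilde\Delta_{0})=\odot\circ\tau_{\partial f}$, which is exactly the statement that $f\mapsto\tilde p_{0}\circ\odot\circ\tau_{f}$ is a chain map $\E_{d}(\m)\to\End\bigl(\Sym^{<k}(V\oplus\KK\hbar)\bigr)(\m,\mathbf 1)$. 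Without this insertion-and-multiplication construction, the ``homotopy transfer across a zig-zag'' in your step (3) has nothing concrete to transfer, so as written the argument does not go through; the compatibility with composition, which is the part the paper itself leaves incomplete, would then be checked by showing $\tau_{g\circ f}$ and $\tau_{g}\circ\tau_{f}$ are homotopic, not by transferring structure along a quasiisomorphism of operads.
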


 We call Assertion~\ref{thm.ed} an ``assertion'' rather than a ``theorem'' because we will not give a complete proof.  We will describe, for each $f\in \E_{d}(\m)$, the corresponding map $\widehat\Sym(V\oplus \KK\hbar)^{\otimes m} \to \widehat\Sym(V\oplus \KK\hbar)$.  More work remains to check compositions, so that our construction really gives a homotopy-operad action.
As in Section~\ref{section.frobchains}, we will work over $\KK = \RR$, but we find it clarifying to distinguish  the coefficient field from the manifold $\RR$.

\begin{proof}[Idea of proof]
  Rather than considering the action of $\qloc$ on the complex $\Chains_{\bullet}(\RR^{d})$ of smooth chains, we will use the action on the complex $\Chains^{\dist}_{\bullet}(\RR^{d})$ of distributional chains.  The elements of $\qloc$ and $\E_{d}$ will remain smooth.
  
  We begin by describing the deformed differential on $\widehat\Sym(V)\[\hbar\]$.
  For $x\in \RR^{d}$, let $\delta_{x} \in \Chains^{\dist}_{0}(\RR^{d})$ denote the delta distribution supported at $x$, thought of as a $0$-chain, and $\iota_{x} : \KK \to \Chains^{\dist}_{\bullet}(\RR^{d})^{\otimes m}$ the map $1 \mapsto (\delta_{x})^{\otimes m}$.  Let $\int: \Chains^{\dist}_{\bullet}(\RR^{d}) \to \KK$ denote the map that vanishes in degree $\bullet \neq 0$ and sends compactly supported distributions to their total volumes, and $p = \int^{\otimes n}: \Chains^{\dist}_{\bullet}(\RR^{d})^{\otimes n} \to \KK$.  Then $p$ and $\iota_{x}$ are quasiisomorphisms such that $p\circ\iota_{x} = \id$.  It follows that we can choose $\mathbb S_{n}$-equivariant homotopies $h_{x} : \Chains^{\dist}_{\bullet}(\RR^{d})^{\otimes n} \to \Chains_{\bullet+1}(\RR^{d})^{\otimes n}$ such that $[\partial,h_{x}] = \partial\circ h_{x} + h_{x}\circ\partial = \id - \iota_{x}\circ p$, where $\partial$ denotes the de Rham differential.  By tensoring and taking $\SS_{n}$-invariants, we abuse notation and let $p$, $\iota_{x}$, and $h_{x}$ also denote the induced deformation retraction between $\widehat\Sym(V)\[\hbar\]$ and $\widehat\Sym(V\otimes\Chains^{\dist}_{\bullet}(\RR^{d}))\[\hbar\]$.
  
  We now use Proposition~\ref{shbdfthm} to deform the linear differential $\partial$ on $\widehat\Sym(V\otimes\Chains^{\dist}_{\bullet}(\RR^{d}))\[\hbar\]$ to $\partial + \Delta$, where $\Delta = o(1)$.  The homological perturbation lemma (see e.g.~\cite{Crainic04,MR2762538}) allows us to deform the whole deformation retraction:
  \begin{gather*}
    \tilde\iota_{x} = (1 - h_{x}\circ \Delta)^{-1}\circ \iota_{x} : \widehat\Sym(V)\[\hbar\] \to \widehat\Sym(V\otimes\Chains^{\dist}_{\bullet}(\RR^{d}))\[\hbar\] \\
    \tilde p_{x} = p \circ (1 - \Delta \circ h_{x})^{-1} : \widehat\Sym(V\otimes\Chains^{\dist}_{\bullet}(\RR^{d}))\[\hbar\] \to \widehat\Sym(V)\[\hbar\] \\
    \tilde\Delta_{x} = p \circ (1 - \Delta \circ h_{x})^{-1}\Delta \circ \iota_{x} : \widehat\Sym(V)\[\hbar\] \to \widehat\Sym(V)\[\hbar\] \\
    \tilde h_{x} = h_{x} (1 - \Delta \circ h_{x})^{-1} : \widehat\Sym(V\otimes\Chains^{\dist}_{\bullet}(\RR^{d}))\[\hbar\]  \to \widehat\Sym(V\otimes\Chains^{\dist}_{\bullet}(\RR^{d}))\[\hbar\] 
  \end{gather*}
  Then $\tilde\iota_{x}$ and $\tilde p_{x}$ are quasiisomorphisms between $\widehat\Sym(V\otimes\Chains^{\dist}_{\bullet}(\RR^{d}))\[\hbar\]$ with the differential $\partial + \Delta$ and $\widehat\Sym(V)\[\hbar\]$ with the differential $\partial_{V} + \tilde\Delta_{x}$; that they are quasiisomoprhisms is witnessed by the equations $\tilde p_{x} \circ \tilde \iota_{x} = \id$ and $[\partial+\Delta, \tilde h_{x}] = \id - \tilde \iota_{x} \circ \tilde p_{x}$.
  
Finally, we decide to give $\widehat\Sym(V)\[\hbar\]$ the differential $\partial_{V} + \tilde \Delta_{0}$.  For future use, we let $\tau_{x} = \tilde\iota_{x} \circ \tilde p_{x} \circ \tilde\iota_{0}$.  It is a chain map from $\bigl(\widehat\Sym(V)\[\hbar\], \partial_{V} + \tilde \Delta_{0}\bigr)$ to $\bigl( \widehat\Sym(V\otimes\Chains^{\dist}_{\bullet}(\RR^{d}))\[\hbar\], \partial + \Delta\bigr)$.

%\textbf{Step 2:}

  For $k \in \NN$, we will now describe an $\E_{d}$-algebra structure on the finite-dimensional quotient $\Sym^{<k}(V\oplus \KK\hbar) = \Sym(V\oplus \KK\hbar) / \Sym^{\geq k}(V\oplus \KK\hbar)$ of $\widehat\Sym(V\oplus \KK\hbar)$ with the differential $\partial_{V} + \tilde \Delta_{0}$.  Our construction will be compatible with increasing the value of $k$, and so will induce an $\E_{d}$-algebra structure on the projective limit $\widehat\Sym(V\oplus \KK\hbar)$.
  
  As in Proposition~\ref{shbdfthm}, let $\gamma_{m,n,\beta}$ denote the generator of $\hBDF$ with $m$ inputs, $n$ outputs, and genus $\beta$.  It determines an $m$th-order differential operator on $\widehat\Sym\bigl(\Chains^{\dist}_{\bullet}(\RR^{d})\otimes V \oplus \KK\hbar\bigr)$; by an abuse of notation, we will also call this differential operator $\gamma_{m,n,\beta}$.  As an $m$th-order differential operator, $\gamma_{m,n,\beta}$ lowers degree in $\Chains^{\dist}_{\bullet}(\RR^{d})\otimes V$ by $m$.  But it then raises degree in $\Chains^{\dist}_{\bullet}(\RR^{d})\otimes V$ by $n\geq 1$, and also raises degree in $\hbar$ by $\beta + m - 1$.  All together, we see that $\gamma_{m,n,\beta}$ does not lower the total degree of a monomial in $\Chains^{\dist}_{\bullet}(\RR^{d})\otimes V \oplus \KK\hbar$, and therefore descends to the quotient $\Sym^{<k}\bigl(\Chains^{\dist}_{\bullet}(\RR^{d})\otimes V \oplus \KK\hbar\bigr)$ for any $k$.  Moreover, when $m+n+\beta >k$, $\gamma_{m,n,\beta}$ acts by $0$ on this quotient.  Choose $\ell$ such that all generators $\gamma_{m,n,\beta}$ of $\hBDF$ with $m+n+\beta \leq k$ act $\ell$-quasilocally.  By rescaling the standard metric on $\RR^{d}$ by a factor of $2\ell$, we may suppose in fact that all these generators act $\frac12$-quasilocally.  
   
  Suppose that $f\in \E_{d}(\m)$.  By embedding $\Config(\m,\RR^{d})$ into $\RR^{dm}$, we can think of $f = f(x,\dots,z)$ as a smooth chain in $m$ variables $x,\dots,z$ each ranging over $\RR^{d}$.  We can integrate smooth chains against distributions, and so it makes sense to define:
  $$ \tau_{f} = \int_{x,\dots,z} f(x,\dots,z)\,\tau_{x} \otimes \dots \otimes \tau_{z} : \Sym^{<k}(V\oplus \KK\hbar)^{\otimes m} \to \Sym^{<k}\bigl(V\otimes\Chains^{\dist}_{\bullet}(\RR^{d})\oplus\KK\hbar\bigr)^{\otimes m} $$
  Let $\odot: \Sym^{<k}\bigl(V\otimes\Chains^{\dist}_{\bullet}(\RR^{d})\oplus\KK\hbar\bigr)^{\otimes m} \to \Sym^{<k}\bigl(V\otimes\Chains^{\dist}_{\bullet}(\RR^{d})\oplus\KK\hbar\bigr)$ denote the commutative multiplication, and consider the map $\odot\circ \tau_{f}$.  The map $\odot$ is not a map of chain complexes if $\Sym^{<k}\bigl(V\otimes\Chains^{\dist}_{\bullet}(\RR^{d})\oplus\KK\hbar\bigr)$ is given the differential $\partial+\Delta$.  But by construction, $f$ vanishes whenever any two of its variables get within distance $1$ of each other, whereas the failure of $\odot$ to be a map of chain complexes is supported on those chains that are within distance $\frac12$ of some diagonal.  It follows that
  $ (\partial + \Delta)\circ \odot\circ \tau_{f} - \odot\circ\tau_{f} \circ (\partial_{V} + \tilde\Delta_{0}) = \odot\circ\tau_{\partial f}. $
  
  Finally, we declare that $f$ acts on $\Sym^{<k}(V\oplus \KK\hbar)$ by $\eta(f) = \tilde p_{0}\circ\odot \circ \tau_{f} : \Sym^{<k}(V\oplus \KK\hbar)^{\otimes m} \to \Sym^{<k}(V\oplus \KK\hbar)$.  The above calculations prove that $\eta$ is a chain map from $\E_{d}(\m)$ to $\End\bigl(\Sym^{<k}(V\oplus \KK\hbar)\bigr)(\m,\mathbf 1)$.  In our model of $\E_{d}$, composition is defined only up to a system of homotopies.  It is not too difficult to show that for any composition $g\circ f$ in $\E_{d}$, $\tau_{g\circ f}$ and $\tau_{g}\circ \tau_{f}$ are homotopic.  More generally, we expect that $\eta$ extends to a homomorphism of homotopy operads.
\end{proof}

If the quasilocal $\hBDF$ action quantizes an AKSZ theory valued in an $\shLB_{d}$-algebra $V$, then one can show that, modulo $\hbar$, the differential $\tilde\Delta_{0}$ on $\widehat\Sym(V)$ is precisely the extension as a derivation of the sum of the generators of $\shLB_{d}$ with $m=1$ input.  Moreover:

\begin{proposition}
  Suppose that the classical AKSZ theory on $\RR^{d}$ valued in an $\shLB_{d}$-algebra $V$ admits a path integral quantization.  Then in the corresponding $\E_{d}$-algebra structure on $\widehat\Sym(V)\[\hbar\]$, any cycle $f\in \E_{d}(\mathrm 2)$ representing the fundamental class of the $(d-1)$-sphere $S^{d-1} \overset\sim\mono \E_{d}(\mathrm 2) $ acts, modulo $\hbar^{2}$, by $\hbar$ times the binary bracket on $\widehat\Sym(V)$ determined by the $\shLB_{d}$ structure.
\end{proposition}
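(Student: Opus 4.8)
We only sketch the argument. Recall from the proof of Assertion~\ref{thm.ed} the maps $p,\iota_{x},h_{x}$, their perturbations $\tilde p_{x},\tilde\iota_{x},\tilde h_{x}$, the maps $\tau_{x}=\tilde\iota_{x}\circ\tilde p_{x}\circ\tilde\iota_{0}$, the commutative product $\odot$, and the perturbed operator $\Delta=\sum_{m,n,\beta}\hbar^{\beta+m-1}\gamma_{m,n,\beta}$, so that $f\in\E_{d}(\mathrm 2)$ acts by $\eta(f)=\tilde p_{0}\circ\odot\circ\tau_{f}$ with $\tau_{f}=\int_{x,y}f(x,y)\,\tau_{x}\otimes\tau_{y}$. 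Since $f$ is a cycle, $\eta(f)$ is a chain map for the differential $\partial_{V}+\tilde\Delta_{0}$, and its homotopy class depends only on $[f]\in\H_{d-1}(\E_{d}(\mathrm 2))=\H_{d-1}(S^{d-1})=\KK$; so it suffices, for one convenient representative $f$ --- say a smooth form concentrated near the core sphere $\{(x,y):|x-y|=2\}$ --- to expand $\eta(f)(a\otimes b)$ in powers of the variable $\hbar$ and prove that modulo $\hbar^{2}$ it is chain homotopic to $\hbar\,L_{(2)}(a,b)$, where $L_{(2)}$ is the binary bracket of the flat $\Lie_{\infty}$-algebra $\widehat\Sym(V)[d-1]$ of Proposition~\ref{prop.poisd}. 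Concretely, we would establish: (i) the $\hbar^{0}$-coefficient of $\eta(f)$ is null-homotopic; (ii) its $\hbar^{1}$-coefficient equals $L_{(2)}(a,b)$ up to a boundary. (The target $\hbar L_{(2)}$ is a cycle modulo $\hbar^{2}$ because $\partial_{V}+\tilde\Delta_{0}$ is, modulo $\hbar$, the $1$-ary operation of that $\Lie_{\infty}$-structure, which by the quadratic $\Lie_{\infty}$-relation is a derivation of $L_{(2)}$.)

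For (i): modulo $\hbar$ one has $\Delta\equiv\sum_{n}\gamma_{1,n,0}$, a \emph{derivation} of $\widehat\Sym\bigl(V\otimes\Chains^{\dist}_{\bullet}(\RR^{d})\bigr)$, so $\odot$ is a chain map and the perturbed data are the classical transfer data. From $[\partial+\Delta,\tilde h_{x}]=\id-\tilde\iota_{x}\tilde p_{x}$ one computes $\tau_{x}-\tilde\iota_{0}=-[\partial+\Delta,\tilde h_{x}\circ\tilde\iota_{0}]$, so --- choosing $h_{x}$ continuously and $\SS$-equivariantly in $x$, as in Section~\ref{section.frobchains} --- $\tau_{x}$ is chain homotopic, through $x$-families, to the map $\tilde\iota_{0}$ which is constant in $x$. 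Hence $\int_{x,y}f(x,y)\,\tau_{x}(a)\odot\tau_{y}(b)$ is homotopic to $f$ integrated against the constant family $\tilde\iota_{0}(a)\odot\tilde\iota_{0}(b)$; but pairing a chain of positive dimension $d-1$ with a constant family computes its pushforward to a point, which vanishes. Applying the mod-$\hbar$ chain map $\tilde p_{0}$ shows the $\hbar^{0}$-coefficient of $\eta(f)$ is null-homotopic.

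For (ii): modulo $\hbar^{2}$ the new terms come from a single insertion of $\hbar\Delta^{(1)}$, $\Delta^{(1)}=\sum_{n}\gamma_{2,n,0}+\sum_{n}\gamma_{1,n,1}$. Since $|x-y|>1$ while every relevant $\gamma_{m,n,\beta}$ is $\tfrac12$-quasilocal, an insertion confined to a single $\tau_{x}$ (whose image concentrates near $x$), or a genus-one $\gamma_{1,n,1}$ ``wheel'' (unary in the manifold variable), is position-blind and, by the mechanism of~(i), contributes a term homotopic to a constant family and killed by $f$. The one surviving possibility is that a $\gamma_{2,n,0}$ ties the copy of $V$ spread out over $x$ to the one over $y$: once $h_{0}$, inside $\tilde p_{0}=p\circ(1-\Delta\circ h_{0})^{-1}$, cones both images toward the origin where they overlap, $\gamma_{2,n,0}$ --- whose $\Chains^{\dist}_{\bullet}(\RR^{d})$-factor under the path integral quantization (Definition~\ref{defn.quantumaksz}, via $\hBDF\to P\otimes\DD(P)$ of Proposition~\ref{prop.tensorformulae5}) has leading, syzygy-zero term the Thom form of $\diag(\RR^{d})\mono\RR^{d}\times\RR^{d}$ and whose $V$-factor is the $(2,n)$-corolla of $\shLB_{d}$ --- can contract them. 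Tracking the $(x,y)$-dependence through the transfer maps exhibits the resulting kernel, as a function of $(x,y)\in\Config_{>1}(\mathrm 2,\RR^{d})$, as a closed $(d-1)$-form (closedness forced by $\eta(f)$ being a chain map), hence cohomologous to $c\,\omega_{12}$ with $\omega_{12}$ the propagator generating $\H^{d-1}(\Config_{>1}(\mathrm 2,\RR^{d}))$; pairing with $f$ gives $c$, and $c=1$ by the normalization of the Thom form (equivalently, a linking-number computation). Summing the $(2,n)$-corollas over $n$ and extending as a biderivation yields $L_{(2)}$, so $\eta(f)(a\otimes b)\equiv\hbar\,L_{(2)}(a,b)\pmod{\hbar^{2}}$.

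The main obstacle is the bookkeeping in step~(ii): one must organize the homological-perturbation-lemma expansions of $\tilde\iota_{x},\tilde p_{0}$ and of the $\hBDF$-action of $\gamma_{2,n,0}$ and verify that, modulo $\hbar^{2}$ and chain homotopy, exactly one ``$x$-to-$y$ propagator'' survives --- the one-point and genus-one insertions producing position-blind terms killed by $f$ as in~(i), and the higher-syzygy corrections to the Thom form producing an exact remainder in the cohomology class of the kernel --- while the overall constant is pinned by the linking-number normalization. This is a locality estimate: the $\tfrac12$-quasilocality of the $\gamma$'s together with the separation $|x-y|>1$ in $\E_{d}(\mathrm 2)$ force all two-point information to flow through a single propagator. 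It is the $\E_{d}$-analogue of the fact that the first-order-in-$\hbar$ part of the factorization product of two observables supported at separated points recovers the classical bracket.
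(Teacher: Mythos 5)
Your proposal is correct and follows essentially the same route as the paper: isolate the single surviving $\gamma_{2,n,0}$ insertion connecting the two marked points, let the homotopy $h_{0}$ fill the $(d-1)$-sphere $f$ to a ball, and pin the constant to $1$ by intersecting that ball with the diagonal (your linking-number normalization). The only real difference is bookkeeping — the paper discards the $\gamma_{1,n,0}$ and $\gamma_{1,n,1}$ contributions by a short \emph{chain-degree} bigrading argument (they act in chain degrees $0$ and $1-d$, so cannot absorb the $(d-1)$-chain $f$ before the final integration) where you use homotopies to constant families plus quasilocality; the conclusion is the same.
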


\begin{proof}
  As in the proof of Assertion~\ref{thm.ed}, we let $p$ denote the extension as an algebra homomorphism of $\int: V \otimes \Chains^{\dist}_{\bullet}(\RR^{d}) \to V$ to a map $\widehat\Sym(V \otimes \Chains^{\dist}_{\bullet}(\RR^{d})) \to \widehat\Sym(V)$, and $\iota_{x}$ is the extension as an algebra homorphism of the map $\otimes\delta_{x} : V \to V \otimes\Chains^{\dist}_{\bullet}(\RR^{d})$, where $x\in \RR^{d}$.  We let $h_{x}$ denote a homotopy between $\id$ and $\iota_{x}\circ p$.
  
  Modulo $\hbar^{2}$, the differential $\Delta$ has three parts:
  \begin{equation} \label{eqn.delta1} \Delta = \sum_{n} \gamma_{1,n,0} + \hbar \sum_{n} \gamma_{1,n,1} + \hbar\sum_{n} \gamma_{2,n,0} \end{equation}
  As in the proof of Assertion~\ref{thm.ed}, $\gamma_{m,n,\beta}$ denotes the extension as an $m$th order differential operator of the action of the generator of $\hBDF$ with $m$ inputs, $n$ outputs, and genus $\beta$.
  
  The algebra $\widehat\Sym(V \otimes \Chains^{\dist}_{\bullet}(\RR^{d}))$ has a bigrading: in addition to the total homological degree (for which $\Delta$ has degree $-1$), there is also the \define{chain degree}.  
  In a path integral quantization, the first sum in equation~\eqref{eqn.delta1} acts entirely in chain degree $0$; the second is in chain degree $1-d$, and the third is in chain degree $-d$.  Moreover, the first term is closed for the de Rham differential.  After unpacking the action of $f$, we can conclude for chain degree reasons that $f$ acts as:
  $$ p \circ   \hbar\sum_{n} \gamma_{2,n,0}  \circ h_{0} \circ \int f(x,y)\iota_{x}\iota_{y} + O(\hbar^{2}) $$
  
  In $V$, $\gamma_{2,n,0}$ acts as the $n$th Taylor coefficient of the binary bracket.   On the chain side, $p \circ \gamma_{2,n,0}$ acts as multiplication, followed by integration.  The homotopy $h_{0}$ antidifferentiates the $(d-1)$-sphere $f$ into a solid $d$-ball $F$ with boundary $\partial F = f$, and by considering chain degree separately in the variables $x$ and $y$, we see that the two inputs of $\gamma_{2,n,0}$ must not act in the same variable.  Therefore: 
  $$ {\textstyle\int}  \circ \mult \circ \int F(x,y)\iota_{x}\iota_{y} = \int \delta_{x-y} F(x,y) = 1 \vspace*{-21pt} $$
\end{proof}

Similar arguments are expected to work in higher arity, showing that a path integral quantization of an AKSZ theory gives an $\E_{d}$ quantization.  Assuming this, we conclude:

\begin{corollary}\label{cor.final}
  Any quasilocal  $\h\invFrob_{d}$ action on $\Chains_{\bullet}(\RR^{d})$ determines a universal wheel-free $\E_{d}$ quantization of $\Pois_{d}$-algebras, and hence implies the formality of the $\E_{d}$ operad when $d\geq 2$.
\end{corollary}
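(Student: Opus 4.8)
The plan is to compose the constructions of Definition~\ref{defn.quantumaksz} and Assertion~\ref{thm.ed} and then invoke the known equivalence between formality of $\E_{d}$ and universal quantization of $\Pois_{d}$-algebras. A quasilocal $\h\invFrob_{d}$ action on $\Chains_{\bullet}(\RR^{d})$ is precisely the hypothesis of Definition~\ref{defn.quantumaksz} with $M = \RR^{d}$ and $P = \h\invFrob_{d} = \DD(\shLB_{d})$: by Proposition~\ref{prop.ddish} this $P$ is a genus-graded positive locally finite-dimensional cofibrant replacement of $\invFrob_{d}$, its genus-zero part is the dioperadic cofibrant replacement $\hdiFrob_{d}$, and -- assuming, as in Conjecture~\ref{conj}, that the generators lifting basis elements of $\invFrob_{d}$ act by Thom forms -- its genus-zero restriction is the canonical quasilocal action of Corollary~\ref{cor.qlocaction}. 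Since $\DD(P) = \DD(\DD(\shLB_{d}))$ is a cofibrant replacement of $\shLB_{d}$, to every $\DD(P)$-algebra $V$ -- equivalently, up to homotopy, to every $\shLB_{d}$-algebra, i.e.\ to every semistrict homotopy $\Pois_{d}$ infinitesimal manifold $\spec\widehat\Sym(V)$ -- Proposition~\ref{prop.tensorformulae5} assigns a quasilocal $\hBDF$ action on $\Chains_{\bullet}(\RR^{d}) \otimes V$, a path integral quantization of the classical AKSZ theory with source $\RR^{d}$ and target $V$.

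First I would apply Assertion~\ref{thm.ed} to this path integral quantization, producing an $\E_{d}$-algebra structure on $\widehat\Sym(V)\[\hbar\] = \widehat\Sym(V\oplus\KK\hbar)$ with differential of the form $\partial_{V} + o(1)$. Next I would check that this $\E_{d}$-structure is an honest quantization of the classical one: by the remark following Assertion~\ref{thm.ed} its differential reduces modulo $\hbar$ to the derivation extension of the sum of the $m=1$ generators of $\shLB_{d}$; by the Proposition preceding this corollary its binary bracket reduces modulo $\hbar^{2}$ to $\hbar$ times the classical $\shLB_{d}$-bracket; and the same chain-degree bookkeeping, applied to cycles representing the generating classes of $\H_{\bullet}(\E_{d}(\m)) = \Pois_{d}(\m)$, extends the identification to all arities, so that the classical limit of the $\E_{d}$-action is the semistrict homotopy $\Pois_{d}$ (hence $\H_{\bullet}(\E_{d}) = \Pois_{d}$) structure on $\widehat\Sym(V)$. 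Since the whole assignment is functorial in $V$ and depends on no data beyond the fixed quasilocal $\h\invFrob_{d}$ action, it constitutes a \emph{universal} quantization procedure; and it is wheel-free, the involutive relation of $\invFrob_{d}$ being exactly what excludes wheel graphs from the diagrammatic expansion (cf.\ Remark~\ref{remark.bvconnection} and the $d=1$ discussion after Conjecture~\ref{conj}, where $\h\invFrob_{1}$-actions correspond precisely to universal wheel-free quantizations). The final clause then follows from the equivalence, for $d \geq 2$, between universal quantizations of $\Pois_{d}$-algebras and formality morphisms of $\E_{d}$~\cite{MR1718044,LambrechtsVolic2008}.

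The main obstacle is that two of the ingredients above are only partially established in this paper. Assertion~\ref{thm.ed} constructs the candidate operations $f \mapsto \tilde p_{0} \circ \odot \circ \tau_{f}$ one arity at a time but leaves their compatibility with the (homotopy-)operadic compositions unverified, and the classical-limit computation of the Proposition preceding this corollary is carried out only in arity~$2$. So a complete proof must (i) show that these operations assemble into a homomorphism of homotopy operads $\E_{d} \to \End(\widehat\Sym(V)\[\hbar\])$ -- concretely, that $\tau_{g \circ f}$ and $\tau_{g} \circ \tau_{f}$ are coherently homotopic, using that $\odot$ fails to be a chain map only near the diagonals, where $\E_{d}$-chains vanish by construction -- and (ii) extend the chain-degree argument to all arities, identifying the full classical limit with the $\Pois_{d}$-structure. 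Granting these, the corollary is immediate from Definition~\ref{defn.quantumaksz}, Assertion~\ref{thm.ed}, and the cited formality equivalence.
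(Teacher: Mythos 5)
Your assembly of ingredients (Definition~\ref{defn.quantumaksz}, Proposition~\ref{prop.tensorformulae5}, Assertion~\ref{thm.ed}, and the cited equivalence between universal quantization and formality of $\E_{d}$) matches the paper's route, and you correctly flag that the operadic coherence in Assertion~\ref{thm.ed} and the higher-arity classical-limit check are only sketched --- the paper itself explicitly assumes both before stating the corollary, so that is not where you diverge from it. The genuine gap is elsewhere: everything you construct lives on $\widehat\Sym(V)\[\hbar\]$ for $V$ an $\shLB_{d}$-algebra, i.e.\ you quantize formal $\Pois_{d}$ \emph{infinitesimal manifolds}, whereas the corollary asserts a universal quantization of $\Pois_{d}$-\emph{algebras}. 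A general $\Pois_{d}$-algebra is not the completed symmetric algebra of an $\shLB_{d}$-algebra, and your appeal to ``functoriality in $V$'' establishes universality only over the wrong category of targets.

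The paper bridges this in two steps that are absent from your proposal. First, it isolates the class of \emph{polynomial} $\shLB_{d}$-algebras (those $V$ for which each element of $V^{\otimes m}$ is acted on nontrivially by only finitely many generators); for such $V$, every completed symmetric algebra appearing in the proof of Assertion~\ref{thm.ed} can be replaced by the uncompleted $\Sym(V)$. This is essential, because an honest $\Pois_{d}$-algebra cannot be fed into a construction that only makes sense after formal completion. Second, it invokes the fact that in characteristic $0$ every $\Pois_{d}$-algebra admits a resolution by the symmetric algebra on a polynomial $\shLB_{d}$-algebra, so that quantizing these special targets suffices to quantize all $\Pois_{d}$-algebras up to quasiisomorphism, which is what the formality conclusion requires. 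Without both points your argument yields only a quantization of pro-nilpotent targets, which is strictly weaker than the statement of the corollary; your remarks on wheel-freeness are fine in spirit but are not a substitute for this reduction.
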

\begin{proof}
  Say that an $\shLB_{d}$-algebra (resp.\ $\hBDF$-algebra) $V$ is \define{polynomial} if 
  for every $ v\in V^{\otimes m}$, there are only finitely many generators with non-zero action on $ v$.
   Any quasilocal $\h\invFrob_{d} = \DD(\shLB_{d})$-action on $\Chains_{\bullet}(\RR^{d})$ gives a universal quantization of AKSZ theories on $\RR^{d}$ with values in an $\shLB_{d}$-algebra, and if the target is polynomial, then all completed symmetric algebras in the proof of Assertion~\ref{thm.ed} can be replaced by non-completed symmetric algebras.   On the other hand, in characteristic $0$, every $\Pois_{d}$-algebra has a resolution by the symmetric algebra on a polynomial $\shLB_{d}$-algebra.
\end{proof}

This gives an alternate proof of Theorem~\ref{thm.no1} obstructing the existence of a properadic homotopy $\Frob_{1}$ action on $\Chains_{\bullet}(\RR)$, since  a wheel-free universal quantization of polynomial $\Pois_{1}$-algebras does not exist~\cite{Dito13,Willwacher2013}.  Indeed, our obstruction is exactly the same as the obstruction found by Merkulov~\cite{MR2397628}.  

To conclude, let us mention how one could prove Conjecture~\ref{conj}.  Corollary~\ref{cor.final} provides one direction of the equivalence.  In the other direction, the formality of $\E_{d}$ implies, via the factorization algebras of \cite{costellogwilliam,DAG}, that there is a universal quantization of AKSZ theories valued in an $\shLB_{d}$-algebra $V$.  This in turn is equivalent to a  system of quasilocal operations on $\Chains_{\bullet}(\RR^{d}) \otimes V$ that depend in a universal way on the action of $\shLB_{d}$ on $V$.  Unpacking this gives some properad that acts quasilocally on $\Chains_{\bullet}(\RR^{d})$.  We expect that this properad is precisely $\h\invFrob_{d}$.

%\bibliography{thesisreferences}{}
%\bibliographystyle{alpha}

\end{document}